\newtheorem{theorem}{Theorem}
\newtheorem{proposition}[theorem]{Proposition}
\newenvironment{proof}[1][Proof]{\noindent\textbf{#1.} }{\ \rule{0.5em}{0.5em}}
\begin{document}

\title{Duality Mappings and Metric Extensor}
\author{A. Manuel Moya\thanks{%
e-mail: anmanumoya@gmail.com} \\
National Technological University-Haedo Regional Faculty\\
Paris 532, 1706 Haedo, Buenos Aires, Argentina}
\maketitle

\begin{abstract}
We introduce the key concepts of duality mappings and metric extensor. The
fundamental identities involving the duality mappings are presented, and we
disclose the logical equivalence between the so-called metric tensor and the
metric extensor. By making use of the duality mappings and the metric
extensor, we construct the so-called metric products, i.e., scalar product
and contracted products of both multivectors and multiforms. The so-known
identities involving the metric products are obtained. We find the
fundamental formulas involving the metric extensor and, specially, we try
its surprising inversion formula. This proposal unveils, once and for all,
an unsuspected meaning of the metric products.

\pagebreak
\end{abstract}

\tableofcontents

\pagebreak

\section{Introduction}

For the so-called vector spaces of multivectors and multiforms over a finite
dimensional real vector space, we introduce the key concept of duality
mappings: the pairing and the contractions, in sections 3 and 4. They all
apply pairs either (multiform,multivector) or (multivector,multiform) into
scalars, multivectors or multiforms. In section 5, the fundamental
identities involving the duality mappings are presented. We also introduce
another key concept: the metric extensor over a finite dimensional real
vector space, in section 6. And, we there disclose the logical equivalence
between the so-called metric tensor and the metric extensor. In section 6,
we also study in detail the so-called extension of the metric extensor,
because of it plays a fundamental roll for defining the so-called metric
products, i.e., scalar and contracted products of both multivectors and
multiforms. Then, by making use of the duality mappings and the metric
extensor, we construct those metric products, in sections 7, 8 and 9. In
section 10, the so-known identities involving the metric products are
obtained. Finally, in section 11, we find the fundamental formulas involving
the metric extensor and, specially, a surprising inversion formula for it.
In Appendix, we make the proofs of some identities and formulas for showing
"the tricks of the trade".

\section{Multiforms and multivectors}

Let $V$ be a real vector space of finite dimension, i.e., $\dim V=n\in 
\mathbb{N}.$ As it is usual, $V^{\star }$ will denote the dual vector space
of $V.$ Recall that $\dim V^{\star }=\dim V.$

Take a non-negative integer number $p,$ $\left( 0\leq p\leq n\right) .$ A $0$%
-form (or, scalar) over $V$ is a real number, the set of $0$-forms is
denoted by $\bigwedge^{0}V^{\star },$ i.e., $\bigwedge^{0}V^{\star }=\mathbb{%
R}.$

A $1$-form (or, form) over $V$ is a vector belonging to $V^{\star },$ the
set of $1$-forms is denoted by $\bigwedge^{1}V^{\star },$ i.e., $%
\bigwedge^{1}V^{\star }=V^{\star }.$ A $p$-form $\left( p\geq 2\right) $
over $V$ is a skew-symmetric $p$-covariant tensor over $V,$ the set of these
objects is usually denoted by $\bigwedge^{p}V^{\star }.$ As we know, each $%
\bigwedge^{p}V^{\star }$ is a real vector space of finite dimension, $\dim
\bigwedge^{p}V^{\star }=\dbinom{n}{p}.$ Sometimes, $2$-forms and $n$-forms
are respectively called biforms and pseudoscalars.

The direct product $\bigwedge^{0}V^{\star }\times \bigwedge^{1}V^{\star
}\times \cdots \times \bigwedge^{n}V^{\star },$ usually denoted by $%
\bigwedge V^{\star },$ is naturally a real vector space of finite dimension, 
$\dim \bigwedge V^{\star }=2^{n}.$

We call multiform over $V$ any vector belonging to $\bigwedge V^{\star }.$
It means that a multiform $\phi $ is exactly a $\left( n+1\right) $-upla $%
\left( \phi ^{0},\phi ^{1},\phi ^{2},\ldots ,\phi ^{n}\right) ,$ where $\phi
^{0}$ is a scalar, $\phi ^{1}$ is a form, $\phi ^{2}$ is a biform,$\ldots $
and $\phi ^{n}$ is a pseudoscalar.

Analogously, we define $0$-vector over $V,$ $1$-vector over $V$ and $p$%
-vector $\left( p\geq 2\right) $ over $V$ to be respectively a real number,
a vector belonging to $V$ and a skew-symmetric $p$-contravariant tensor over 
$V.$ The set of $p$-vectors is usually denoted by $\bigwedge^{p}V$ and, as
we know, is a real vector space of finite dimension too, $\dim
\bigwedge^{p}V=\dbinom{n}{p}.$ Sometimes, $0$-vectors, $2$-vectors and $n$%
-vectors are respectively called scalars, bivectors and pseudoscalars.

The direct product $\bigwedge^{0}V\times \bigwedge^{1}V\times \cdots \times
\bigwedge^{n}V,$ usually denoted by $\bigwedge V,$ is a real vector space of
finite dimension too, $\dim \bigwedge V=2^{n}.$

The vectors belonging to $\bigwedge V$ are called multivectors over $V.$
Therefore, a multivector $x$ is exactly a $\left( n+1\right) $-upla $\left(
x_{0},x_{1},x_{2},\ldots ,x_{n}\right) ,$ where $x_{0}$ is a scalar, $x_{1}$
is a vector, $x_{2}$ is a bivector,$\ldots $ and $x_{n}$ is a pseudoscalar.

We define the so-called $p$-component operators for multiforms and
multivectors.

For any $\phi \in \bigwedge V^{\star }$ and $x\in \bigwedge V:$%
\begin{equation*}
\text{if }\phi =\left( \phi ^{0},\phi ^{1},\phi ^{2},\ldots ,\phi
^{n}\right) ,\text{ then }\left[ \phi \right] ^{p}=\phi ^{p}
\end{equation*}%
\begin{equation}
\text{if }x=\left( x_{0},x_{1},x_{2},\ldots ,x_{n}\right) ,\text{ then }%
\left[ x\right] _{p}=x_{p}  \label{kcompmulti}
\end{equation}

Notice that $\left[ \left. {}\right. \right] ^{p}$ is a linear mapping from $%
\bigwedge V^{\star }$ to $\bigwedge^{p}V^{\star }$ and $\left[ \left.
{}\right. \right] _{p}$ is a linear mapping from $\bigwedge V$ to $%
\bigwedge^{p}V.$

We define the so-called $p$-homogeneous multiforms and multivectors.

Let $\phi \in \bigwedge V^{\star }$ and $x\in \bigwedge V:$%
\begin{equation*}
\phi \text{ is a }p\text{-homogeneous multiform, iff }\left[ \phi \right]
^{k}=o^{k},\text{ for }k\neq p
\end{equation*}%
\begin{equation}
x\text{ is a }p\text{-homogeneous multivector, iff }\left[ x\right]
_{k}=0_{k},\text{ for }k\neq p  \label{homogmulti}
\end{equation}%
i.e., $\phi =\left( 0,o^{1},o^{2},\ldots ,\phi ^{p},\ldots ,o^{n}\right) ,$ $%
x=\left( 0,0_{1},0_{2},\ldots ,x_{p},\ldots ,0_{n}\right) .$

We define the so-called inclusion operators for $p$-forms and $p$-vectors.

For any $\phi ^{p}\in \bigwedge^{p}V^{\star }$ and $x_{p}\in \bigwedge^{p}V:$%
\begin{equation*}
\left( \phi ^{p}\right) \in \bigwedge V^{\star }\text{ such that }\left[
\left( \phi ^{p}\right) \right] ^{k}=\left\{ 
\begin{array}{ll}
\phi ^{p}, & \text{for }k=p \\ 
o^{k}, & \text{for }k\neq p%
\end{array}%
\right.
\end{equation*}%
\begin{equation}
\left( x_{p}\right) \in \bigwedge V\text{ such that }\left[ \left(
x_{p}\right) \right] ^{k}=\left\{ 
\begin{array}{ll}
x_{p}, & \text{for }k=p \\ 
0_{k}, & \text{for }k\neq p%
\end{array}%
\right.  \label{inclusion}
\end{equation}%
i.e., $\left( \phi ^{p}\right) $ is a $p$-homogeneous multiform whose unique
non-necessarily null component is $\phi ^{p},$ and $\left( x_{p}\right) $ is
a $p$-homogeneous multivector whose unique non-necessarily null component is 
$x_{p}.$

Notice that the inclusion operators are linear mappings: from $%
\bigwedge^{p}V^{\star }$ to $\bigwedge V^{\star }$ or from $\bigwedge^{p}V$
to $\bigwedge V.$

The basic relations among $\left[ \left. {}\right. \right] ^{k},\left[
\left. {}\right. \right] _{k}$ and $\left( \left. {}\right. \right) $ are
the following:%
\begin{equation*}
\left( \left[ \phi \right] ^{p}\right) =\left( 0,o^{1},o^{2},\ldots ,\phi
^{p},\ldots ,o^{n}\right) ,\quad \left( \left[ x\right] _{p}\right) =\left(
0,0_{1},0_{2},\ldots ,x_{p},\ldots ,0_{n}\right)
\end{equation*}%
\begin{equation}
\left[ \left( \phi ^{p}\right) \right] ^{k}=\left\{ 
\begin{array}{ll}
\phi ^{p}, & \text{for }k=p \\ 
o^{k}, & \text{for }k\neq p%
\end{array}%
\right. ,\qquad \left[ \left( x_{p}\right) \right] _{k}=\left\{ 
\begin{array}{ll}
x_{p}, & \text{for }k=p \\ 
0_{p}, & \text{for }k\neq p%
\end{array}%
\right.  \label{relationship}
\end{equation}

We basically call extensor over $V$ any multilinear application from a
direct product of $p$-vectors spaces (or, $p$-forms spaces) to another
direct product of $q$-vectors spaces (or, $q$-forms spaces). The set of any
type of extensor is naturally a real vector space of finite dimension. In
this paper, we shall work \ only with some types of extensors, e.g., $\func{%
ext}\left( V;V^{\star }\right) ,$ whose elements are called crossed extensor
over $V,$ $\func{ext}\left( \bigwedge^{p}V;\bigwedge^{p}V^{\star }\right) $
and $\func{ext}\left( \bigwedge V;\bigwedge V^{\star }\right) ,$ etc.

\section{Duality pairing mapping}

We introduce four mappings which will be called by the name of duality
pairings. The first one applies pairs $\left( p\text{-form},p\text{-vector}%
\right) $ into scalars. It will be called duality pairing of $p$-forms with $%
p$-vectors. The second one is a kind of extension of the first one, it
applies pairs $\left( \text{multiform},\text{multivector}\right) $ into
scalars. It will be called duality pairing of multiforms with multivectors.
The third one applies pairs $\left( p\text{-vector},p\text{-form}\right) $
into scalars. It will be called duality pairing of $p$-vectors with $p$%
-forms. The fourth one is a kind of extension of the third one, it applies
pairs $\left( \text{multivector},\text{multiform}\right) $ into scalars. It
will be called duality pairing of multivectors with multiforms.

The pairing of $\phi ^{p}\in \dbigwedge^{p}V^{\star }$ with $x_{p}\in
\dbigwedge^{p}V$ is defined by

\begin{equation}
\left\langle \phi ^{p},x_{p}\right\rangle _{I}=\left\{ 
\begin{array}{ll}
\phi ^{0}x_{0}, & p=0 \\ 
\phi ^{1}(x_{1}), & p=1 \\ 
\dfrac{1}{p!}\phi ^{p}(e_{j_{1}},\ldots ,e_{j_{p}})x_{p}(\varepsilon
^{j_{1}},\ldots ,\varepsilon ^{j_{p}}), & 2\leq p\leq n%
\end{array}%
\right.  \label{pairingI}
\end{equation}%
where $\left\{ e_{j}\right\} $ is any basis of $V$ and $\left\{ \varepsilon
^{j}\right\} $ is its dual basis for $V^{\star }$ (i.e., $\varepsilon
^{k}(e_{j})=\delta _{j}^{k}$), and sums over the indices $j_{1},\ldots
,j_{p} $ from $1$ to $n$ are implicated.

We emphasize that $\phi ^{p}(e_{j_{1}},\ldots ,e_{j_{p}})x_{p}(\varepsilon
^{j_{1}},\ldots ,\varepsilon ^{j_{p}})$ is a real number which does not
depend on the pair of bases $\left( \left\{ e_{j}\right\} ,\left\{
\varepsilon ^{j}\right\} \right) $ used for calculating it, since both $\phi
^{p}$ and $x_{p}$ are multilinear applications. Therefore, $\left\langle
\phi ^{p},x_{p}\right\rangle _{I}$ is a right invariant scalar which depends
only on $\phi ^{p}$ and $x_{p}.$ Hence, $\left\langle \left. {}\right.
,\left. {}\right. \right\rangle _{I}$ is well defined.

The pairing of $\phi \in \bigwedge V^{\star }$ with $x\in \bigwedge V$ is
defined by%
\begin{equation}
\left\langle \phi ,x\right\rangle _{II}=\underset{k=0}{\overset{n}{\sum }}%
\left\langle \phi ^{k},x_{k}\right\rangle _{I}  \label{pairingII}
\end{equation}

The pairing of $x_{p}\in \bigwedge^{p}V$ with $\phi ^{p}\in
\dbigwedge^{p}V^{\star }$ is defined by%
\begin{equation}
\left\langle x_{p},\phi ^{p}\right\rangle _{III}=\left\langle \phi
^{p},x_{p}\right\rangle _{I}  \label{pairingIII}
\end{equation}

The pairing of $x\in \bigwedge V$ with $\phi \in \dbigwedge V^{\star }$ is
defined by%
\begin{equation}
\left\langle x,\phi \right\rangle _{IV}=\underset{k=0}{\overset{n}{\sum }}%
\left\langle x_{k},\phi ^{k}\right\rangle _{III}  \label{pairingIV}
\end{equation}

From (\ref{pairingIII}) it follows that for all $\phi ^{p}$ and $%
x_{p}:\left\langle \phi ^{p},x_{p}\right\rangle _{I}=\left\langle x_{p},\phi
^{p}\right\rangle _{III},$ and from (\ref{pairingII}), (\ref{pairingIII}), (%
\ref{pairingIV}) it follows that for all $\phi $ and $x:$ $\left\langle \phi
,x\right\rangle _{II}=\left\langle x,\phi \right\rangle _{IV}.$ Hence,
because of no confusion could arise, we shall use the same notation $%
\left\langle \left. {}\right. ,\left. {}\right. \right\rangle $ for all the
duality pairing mappings and we shall only speak about a single duality
pairing mapping.

The duality pairing mapping has two remarkable properties: the bilinearity
and the non-degeneracy.

For all $\alpha ,\beta \in \mathbb{R},$ $\phi ,\psi \in \dbigwedge V^{\star
} $ and $x,y\in \dbigwedge V:$%
\begin{equation*}
\left\langle \alpha \phi +\beta \psi ,x\right\rangle =\alpha \left\langle
\phi ,x\right\rangle +\beta \left\langle \psi ,x\right\rangle \qquad
\left\langle \phi ,\alpha x+\beta y\right\rangle =\alpha \left\langle \phi
,x\right\rangle +\beta \left\langle \phi ,y\right\rangle
\end{equation*}

Let $\phi \in \dbigwedge V^{\star }$ and $x\in \dbigwedge V:$%
\begin{equation*}
\lbrack \left\langle \phi ,x\right\rangle =0\text{ for all }\phi
]\Longrightarrow x=0\qquad \lbrack \left\langle \phi ,x\right\rangle =0\text{
for all }x]\Longrightarrow \phi =o
\end{equation*}

\section{Duality contraction mappings}

\subsection{Duality left contraction mapping}

We introduce four mappings which will be called by the name of duality left
contracting. The first one applies pairs $\left( p\text{-form},q\text{-vector%
}\right) ,$ with $p\leq q,$ into $\left( q-p\right) $-vectors, in its
definition it appears the duality pairing mapping. It will be called duality
left contraction of $q$-vectors by $p$-forms. The second one is an extension
of the first one, it applies pairs $\left( \text{multiform},\text{multivector%
}\right) $ into multivectors, and will be called duality left contraction of
multivectors by multiforms. The third one applies pairs $\left( p\text{%
-vector},q\text{-form}\right) ,$ with $p\leq q,$ into $\left( q-p\right) $%
-forms, in its definition it appears also the duality pairing mapping. It
will be called duality left contraction of $q$-forms by $p$-vectors. The
fourth one is an extension of the third one, it applies pairs $\left( \text{%
multivector},\text{multiform}\right) $ into multiforms, and will be called
duality left contraction of multiforms by multivectors.

The left contraction of $x_{q}\in \dbigwedge^{q}V$ by $\phi ^{p}\in
\dbigwedge^{p}V^{\star }$ $\left( p\leq q\right) $ is defined by%
\begin{equation}
\left\langle \phi ^{p},x_{q}\right\vert _{I}=\left\{ 
\begin{array}{ll}
\dfrac{1}{(q-p)!}\left\langle \widetilde{\phi }^{p}\wedge \varepsilon
^{j_{1}}\wedge \cdots \varepsilon ^{j_{q-p}},x_{q}\right\rangle
e_{j_{1}}\wedge \cdots e_{j_{q-p}}, & p<q \\ 
\left\langle \widetilde{\phi }^{p},x_{p}\right\rangle , & p=q%
\end{array}%
\right.  \label{lcontractionI}
\end{equation}%
where $\left\{ e_{j}\right\} $ is any basis of $V$ and $\left\{ \varepsilon
^{j}\right\} $ is its dual basis for $V^{\star }$ (i.e., $\varepsilon
^{k}(e_{j})=\delta _{j}^{k}$), and sums over the indices $j_{1},\ldots
,j_{q-p}$ from $1$ to $n$ are implicated.

We emphasize that $\left\langle \widetilde{\phi }^{p}\wedge \varepsilon
^{j_{1}}\wedge \cdots \varepsilon ^{j_{q-p}},x_{q}\right\rangle
e_{j_{1}}\wedge \cdots e_{j_{q-p}}$ is a $\left( q-p\right) $-vector which
does not depend on the pair of bases $\left( \left\{ e_{j}\right\} ,\left\{
\varepsilon ^{j}\right\} \right) $ used for defining it, since both $\phi
^{p}$ and $x_{q}$ are multilinear applications. Therefore, $\left\langle
\phi ^{p},x_{q}\right\vert _{I}$ is a right invariant $\left( q-p\right) $%
-vector which depends only on $\phi ^{p}$ and $x_{q}.$ Hence, $\left\langle
\left. {}\right. ,\left. {}\right. \right\vert _{I}$ is well defined.

The left contraction of $x\in \dbigwedge V$ by $\phi \in \dbigwedge V^{\star
}$ is defined by%
\begin{equation}
\left\langle \phi ,x\right\vert _{II}\in \bigwedge V\text{ such that }\left[
\left\langle \phi ,x\right\vert _{II}\right] _{k}=\sum_{j=0}^{n-k}\left%
\langle \phi ^{j},x_{j+k}\right\vert _{I},\text{ }\left( 0\leq k\leq n\right)
\label{lcontractionII}
\end{equation}

The left contraction of $\phi ^{q}\in \bigwedge^{q}V^{\star }$ by $x_{p}\in
\dbigwedge^{p}V$ $\left( p\leq q\right) $ is defined by%
\begin{equation}
\left\langle x_{p},\phi ^{q}\right\vert _{III}=\left\{ 
\begin{array}{ll}
\dfrac{1}{(q-p)!}\left\langle \widetilde{x}_{p}\wedge e_{j_{1}}\wedge \cdots
e_{j_{q-p}},\phi ^{q}\right\rangle \varepsilon ^{j_{1}}\wedge \cdots
\varepsilon ^{j_{q-p}}, & p<q \\ 
\left\langle \widetilde{x}_{p},\phi ^{p}\right\rangle , & p=q%
\end{array}%
\right.  \label{lcontractionIII}
\end{equation}%
where sums over the indices $j_{1},\ldots ,j_{q-p}$ from $1$ to $n$ are
implicated. Analogous remarks to those just done about $\left\langle \left.
{}\right. ,\left. {}\right. \right\vert _{I}$ allow us to assert that $%
\left\langle \left. {}\right. ,\left. {}\right. \right\vert _{III}$ is also
well defined.

The left contraction of $\phi \in \dbigwedge V^{\star }$ by $x\in \dbigwedge
V$ is defined by%
\begin{equation}
\left\langle x,\phi \right\vert _{IV}\in \bigwedge V^{\star }\text{ such
that }\left[ \left\langle x,\phi \right\vert _{IV}\right] ^{k}=%
\sum_{j=0}^{n-k}\left\langle x_{j},\phi ^{j+k}\right\vert _{III},\text{ }%
\left( 0\leq k\leq n\right)  \label{lcontractionIV}
\end{equation}

Because of no confusion could arise, we shall use the same notation $%
\left\langle \left. {}\right. ,\left. {}\right. \right\vert $ for all of the
duality left contraction mappings and we will only speak about a single
duality left contraction mapping.

The duality left contraction mapping has two remarkable properties: the
bilinearity and the non-degeneracy.

For all $\alpha ,\beta \in \mathbb{R},$ $\phi ,\psi \in \dbigwedge V^{\star
} $ and $x,y\in \dbigwedge V:$%
\begin{eqnarray*}
\left\langle \alpha \phi +\beta \psi ,x\right\vert &=&\alpha \left\langle
\phi ,x\right\vert +\beta \left\langle \psi ,x\right\vert \quad \left\langle
\phi ,\alpha x+\beta y\right\vert =\alpha \left\langle \phi ,x\right\vert
+\beta \left\langle \phi ,y\right\vert \\
\left\langle \alpha x+\beta y,\phi \right\vert &=&\alpha \left\langle x,\phi
\right\vert +\beta \left\langle y,\phi \right\vert \quad \left\langle
x,\alpha \phi +\beta \psi \right\vert =\alpha \left\langle x,\phi
\right\vert +\beta \left\langle x,\psi \right\vert
\end{eqnarray*}

Let $\phi \in \dbigwedge V^{\star }$ and $x\in \dbigwedge V:$%
\begin{eqnarray*}
\lbrack \left\langle \phi ,x\right\vert &=&0\text{ for all }\phi
]\Longrightarrow x=0\qquad \lbrack \left\langle \phi ,x\right\vert =0\text{
for all }x]\Longrightarrow \phi =o \\
\lbrack \left\langle x,\phi \right\vert &=&o\text{ for all }%
x]\Longrightarrow \phi =o\qquad \lbrack \left\langle x,\phi \right\vert =o%
\text{ for all }\phi ]\Longrightarrow x=0
\end{eqnarray*}

\subsection{Duality right contraction mapping}

We introduce now other four mappings else which will be called by the name
of duality right contraction. In definitions of two of them, it appears also
the duality pairing mapping. The first one applies pairs $\left( q\text{%
-vector},p\text{-form}\right) ,$ with $p\leq q,$ into $\left( q-p\right) $%
-vectors, and will be called duality right contraction of $q$-vectors by $p$%
-forms. The second one is an extension of the first one, it applies pairs $%
\left( \text{multivector},\text{multiform}\right) $ into multivectors. It
will be called duality right contraction of multivectors by multiforms. The
third one applies pairs $\left( q\text{-form},p\text{-vector}\right) ,$ with 
$p\leq q,$ into $\left( q-p\right) $-forms, and will be called duality right
contraction of $q$-forms by $p$-vectors. The fourth one is an extension of
the third one, it applies pairs $\left( \text{multiform},\text{multivector}%
\right) $ into multiforms. It will be called duality right contraction of
multiforms by multivectors.

The right contraction of $x_{q}\in \dbigwedge^{q}V$ by $\phi ^{p}\in
\dbigwedge^{p}V^{\star }$ $\left( p\leq q\right) $ is defined by%
\begin{equation}
\left\vert x_{q},\phi ^{p}\right\rangle _{I}=\left\{ 
\begin{array}{ll}
\dfrac{1}{(q-p)!}\left\langle x_{q},\varepsilon ^{j_{1}}\wedge \cdots
\varepsilon ^{j_{q-p}}\wedge \widetilde{\phi }^{p}\right\rangle
e_{j_{1}}\wedge \cdots e_{j_{q-p}}, & p<q \\ 
\left\langle x_{p},\widetilde{\phi }^{p}\right\rangle , & p=q%
\end{array}%
\right.  \label{rcontractionI}
\end{equation}%
where $\left\{ e_{j}\right\} $ is any basis of $V$ and $\left\{ \varepsilon
^{j}\right\} $ is its dual basis for $V^{\star }$ (i.e., $\varepsilon
^{k}(e_{j})=\delta _{j}^{k}$), and sums over the indices $j_{1},\ldots
,j_{q-p}$ from $1$ to $n$ are implicated.

We emphasize that $\left\langle x_{q},\varepsilon ^{j_{1}}\wedge \cdots
\varepsilon ^{j_{q-p}}\wedge \widetilde{\phi }^{p}\right\rangle
e_{j_{1}}\wedge \cdots e_{j_{q-p}}$ is a $\left( q-p\right) $-vector which
does not depend on the pair of bases $\left( \left\{ e_{j}\right\} ,\left\{
\varepsilon ^{j}\right\} \right) $ used for defining it, since both $\phi
^{p}$ and $x_{q}$ are multilinear applications. Therefore, $\left\vert
x_{q},\phi ^{p}\right\rangle _{I}$ is a right invariant $\left( q-p\right) $%
-vector which depends only on $\phi ^{p}$ and $x_{q}.$ Hence, $\left\vert
\left. {}\right. ,\left. {}\right. \right\rangle _{I}$ is well defined.

The right contraction of $x\in \dbigwedge V$ by $\phi \in \dbigwedge
V^{\star }$ is defined by%
\begin{equation}
\left\vert x,\phi \right\rangle _{II}\in \dbigwedge V\text{ such that }\left[
\left\vert x,\phi \right\rangle _{II}\right] =\sum_{j=0}^{n-k}\left\vert
x_{j+k},\phi ^{j}\right\rangle _{I},\text{ }\left( 0\leq k\leq n\right)
\label{rcontractionII}
\end{equation}

The right contraction of $\phi ^{q}\in \dbigwedge^{q}V^{\star }$ by $%
x_{p}\in \dbigwedge^{p}V$ $\left( p\leq q\right) $ is defined by%
\begin{equation}
\left\vert \phi ^{q},x_{p}\right\rangle _{III}=\left\{ 
\begin{array}{ll}
\dfrac{1}{(q-p)!}\left\langle \phi ^{q},e_{j_{1}}\wedge \cdots
e_{j_{q-p}}\wedge \widetilde{x}_{p}\right\rangle \varepsilon ^{j_{1}}\wedge
\cdots \varepsilon ^{j_{q-p}}, & p<q \\ 
\left\langle \phi ^{q},\widetilde{x}_{p}\right\rangle , & p=q%
\end{array}%
\right.  \label{rcontractionIII}
\end{equation}%
where sums over the indices $j_{1},\ldots ,j_{q-p}$ from $1$ to $n$ are
implicated. Similar remarks to those just done about $\left\vert \left.
{}\right. ,\left. {}\right. \right\rangle _{I}$ allow us to affirm that $%
\left\vert \left. {}\right. ,\left. {}\right. \right\rangle _{III}$ is also
well defined.

The right contraction of $\phi \in \dbigwedge V^{\star }$ by $x\in
\dbigwedge V$ is defined by%
\begin{equation}
\left\vert \phi ,x\right\rangle _{IV}\in \dbigwedge V^{\star }\text{ such
that }\left[ \left\vert \phi ,x\right\rangle _{IV}\right] =\sum_{j=0}^{n-k}%
\left\vert \phi ^{j+k},x_{j}\right\rangle _{III},\text{ }\left( 0\leq k\leq
n\right)  \label{rcontractionIV}
\end{equation}

Because of no confusion could arise, we shall use the same notation $%
\left\vert \left. {}\right. ,\left. {}\right. \right\rangle $ for all of the
duality right contraction mappings and we shall only speak about a single
duality right contraction mapping.

The duality right contraction mapping has two remarkable properties: the
bilinearity and the non-degeneracy.

For all $\alpha ,\beta \in \mathbb{R},$ $\phi ,\psi \in \dbigwedge V^{\star
} $ and $x,y\in \dbigwedge V:$%
\begin{eqnarray*}
\left\vert \alpha x+\beta y,\phi \right\rangle &=&\alpha \left\vert x,\phi
\right\rangle +\beta \left\vert y,\phi \right\rangle \text{\quad }\left\vert
x,\alpha \phi +\beta \psi \right\rangle =\alpha \left\vert x,\phi
\right\rangle +\beta \left\vert x,\psi \right\rangle \\
\left\vert \alpha \phi +\beta \psi ,x\right\rangle &=&\alpha \left\vert \phi
,x\right\rangle +\beta \left\vert \psi ,x\right\rangle \text{\quad }%
\left\vert \phi ,\alpha x+\beta y\right\rangle =\alpha \left\vert \phi
,x\right\rangle +\beta \left\vert \phi ,y\right\rangle
\end{eqnarray*}

Let $\phi \in \dbigwedge V^{\star }$ and $x\in \dbigwedge V:$%
\begin{eqnarray*}
\lbrack \left\vert x,\phi \right\rangle &=&0\text{ for all }\phi
]\Longrightarrow x=0\text{\qquad }[\left\vert x,\phi \right\rangle =0\text{
for all }x]\Longrightarrow \phi =o \\
\lbrack \left\vert \phi ,x\right\rangle &=&o\text{ for all }%
x]\Longrightarrow \phi =o\text{\qquad }[\left\vert \phi ,x\right\rangle =o%
\text{ for all }\phi ]\Longrightarrow x=0
\end{eqnarray*}

\section{Identities involving the duality mappings}

We present a list of identities which involve the duality mappings.

For all $\phi ^{p}\in \dbigwedge^{p}V^{\star }$ and $v_{1},v_{2},\ldots
,v_{p}\in V:$%
\begin{equation}
\left\langle \phi ^{p},v_{1}\wedge v_{2}\wedge \cdots \wedge
v_{p}\right\rangle =\phi ^{p}(v_{1},v_{2},\ldots ,v_{p})  \label{Fund1}
\end{equation}

For all $x_{p}\in \dbigwedge^{p}V$ and $\omega _{1},\omega _{2},\ldots
,\omega _{p}\in V^{\star }$ $(p\geq 2):$%
\begin{equation}
\left\langle \omega _{1}\wedge \omega _{2}\wedge \cdots \wedge \omega
_{p},x_{p}\right\rangle =x_{p}(\omega _{1},\omega _{2},\ldots ,\omega _{p})
\label{Fund2}
\end{equation}

For all $\omega _{1},\ldots ,\omega _{p}\in V^{\star }$ and $v_{1},\ldots
,v_{p}\in V:$%
\begin{equation}
\left\langle \omega _{1}\wedge \cdots \wedge \omega _{p},v_{1}\wedge \cdots
\wedge v_{p}\right\rangle =\det \left( \left\langle \omega
_{i},v_{j}\right\rangle \right)  \label{Fund3}
\end{equation}

For all $\phi ^{p}\in \bigwedge^{p}V^{\star },$ $x_{q}\in \bigwedge^{q}V,$
and $y_{p}\in \bigwedge^{p}V,$ $\psi ^{q}\in \bigwedge^{q}V^{\star }$ $%
\left( p\leq q\right) :$%
\begin{equation}
\left\langle \phi ^{p},x_{q}\right\vert =\left( -1\right)
^{p(q-p)}\left\vert x_{q},\phi ^{p}\right\rangle ,\quad \left\langle
y_{p},\psi ^{q}\right\vert =\left( -1\right) ^{p(q-p)}\left\vert y_{q},\psi
^{p}\right\rangle .  \label{Fund4}
\end{equation}

For all $\omega \in V^{\star },$ $v_{1},v_{2},\ldots ,v_{p}\in V,$ and $v\in
V,$ $\omega _{1},\omega _{2},\ldots ,\omega _{p}\in V^{\star }:$%
\begin{eqnarray}
\left\langle \omega ,v_{1}\wedge v_{2}\wedge \cdots v_{p}\right\vert
&=&\sum_{k-1}^{p}(-1)^{k-1}\left\langle \omega ,v_{k}\right\rangle
v_{1}\wedge v_{2}\wedge \cdots \overset{\curlyvee }{v}_{k}\cdots v_{p},
\label{contrformvectors} \\
\left\langle v,\omega _{1}\wedge \omega _{2}\wedge \cdots \omega
_{p}\right\vert &=&\sum_{k-1}^{p}(-1)^{k-1}\left\langle v,\omega
_{k}\right\rangle \omega _{1}\wedge \omega _{2}\wedge \cdots \overset{%
\curlyvee }{\omega }_{k}\cdots \omega _{p}.  \label{contrvectorforms}
\end{eqnarray}%
where $\overset{\curlyvee }{v}_{k}$ means that $v_{k}$ must be removed from $%
v_{1}\wedge v_{2}\wedge \cdots v_{p}$ but leaving the remaining vectors in
the order they had, and analogously for $\overset{\curlyvee }{\omega }_{k}.$

For all $\left( \phi ^{p}\right) \in \bigwedge V^{\star }$ and $\left(
x_{p}\right) \in \bigwedge V:$%
\begin{equation}
\left\langle \left( \phi ^{p}\right) ,\left( x_{p}\right) \right\rangle
=\left\langle \phi ^{p},x_{p}\right\rangle  \label{scalarhomog1}
\end{equation}

For all $\left( \phi ^{p}\right) \in \bigwedge V^{\star }$ and $\left(
x_{q}\right) \in \bigwedge V:$%
\begin{equation}
\left\langle \left( \phi ^{p}\right) ,\left( x_{q}\right) \right\rangle =0,%
\text{ if }p\neq q  \label{scalarhomog2}
\end{equation}

For all $\left( \phi ^{p}\right) \in \bigwedge V^{\star }$ and $\left(
x_{q}\right) \in \bigwedge V:$%
\begin{equation}
\left\langle \left( \phi ^{p}\right) ,\left( x_{q}\right) \right\vert
,\left\vert \left( x_{q}\right) ,\left( \phi ^{p}\right) \right\rangle \text{%
are }\left( q-p\right) \text{-homogeneous multivector, and}
\label{contrhomogmultiv0}
\end{equation}%
\begin{equation*}
\left[ \left\langle \left( \phi ^{p}\right) ,\left( x_{q}\right) \right\vert %
\right] _{q-p}=\left\langle \phi ^{p},x_{q}\right\vert ,\text{ }\left[
\left\vert \left( x_{q}\right) ,\left( \phi ^{p}\right) \right\rangle \right]
_{q-p}=\left\vert x_{q},\phi ^{p}\right\rangle ,\text{ if }p\leq q
\end{equation*}%
\begin{equation}
\left\langle \left( \phi ^{p}\right) ,\left( x_{q}\right) \right\vert
=\left\vert \left( x_{q}\right) ,\left( \phi ^{p}\right) \right\rangle =0,%
\text{ if }p>q  \label{contrhomogmultiv}
\end{equation}

For all $\left( x_{p}\right) \in \bigwedge V$ and $\left( \phi ^{q}\right)
\in \bigwedge V^{\star }:$%
\begin{equation}
\left\langle \left( x_{p}\right) ,\left( \phi ^{q}\right) \right\vert ,\text{
}\left\vert \left( \phi ^{q}\right) ,\left( x_{p}\right) \right\rangle \text{
are }\left( q-p\right) \text{-homogeneous multiform, and}
\label{contrhomogmultif0}
\end{equation}%
\begin{equation*}
\left[ \left\langle \left( x_{p}\right) ,\left( \phi ^{q}\right) \right\vert %
\right] _{q-p}=\left\langle x_{p},\phi ^{q}\right\vert ,\text{ }\left[
\left\vert \left( \phi ^{q}\right) ,\left( x_{p}\right) \right\rangle \right]
_{q-p}=\left\vert \phi ^{q},x_{p}\right\rangle ,\text{if }p\leq q
\end{equation*}%
\begin{equation}
\left\langle \left( x_{p}\right) ,\left( \phi ^{q}\right) \right\vert
=\left\vert \left( \phi ^{q}\right) ,\left( x_{q}\right) \right\rangle =o,%
\text{ if }p>q  \label{contrhomogmultif}
\end{equation}

For all $\omega \in V^{\star }$ and $x,y\in \bigwedge V:$%
\begin{equation}
\left\langle \omega ,x\wedge y\right\vert =\left\langle \omega ,x\right\vert
\wedge y+\widehat{x}\wedge \left\langle \omega ,y\right\vert
\label{contrformmultivectors}
\end{equation}

For all $v\in V$ and $\phi ,\psi \in \bigwedge V^{\star }:$%
\begin{equation}
\left\langle v,\phi \wedge \psi \right\vert =\left\langle v,\phi \right\vert
\wedge \psi +\widehat{\phi }\wedge \left\langle v,\psi \right\vert
\label{contrvectormultiforms}
\end{equation}

For all $\phi ,\psi \in \bigwedge V^{\star }$ and $x,y\in \bigwedge V:$%
\begin{equation}
\left\langle \phi ,\left\langle \psi ,x\right\vert \right\vert =\left\langle
\phi \wedge \psi ,x\right\vert \hspace{0.5in}\left\vert \left\vert x,\phi
\right\rangle ,\psi \right\rangle =\left\vert x,\phi \wedge \psi
\right\rangle  \label{dcvectors}
\end{equation}%
\begin{equation}
\left\langle x,\left\langle y,\phi \right\vert \right\vert =\left\langle
x\wedge y,\phi \right\vert \hspace{0.5in}\left\vert \left\vert \phi
,x\right\rangle ,y\right\rangle =\left\vert \phi ,x\wedge y\right\rangle
\label{dcforms}
\end{equation}%
\begin{equation}
\left\langle \left\langle \phi ,x\right\vert ,\psi \right\rangle
=\left\langle x,\widetilde{\phi }\wedge \psi \right\rangle \hspace{0.5in}%
\left\langle \phi ,\left\vert x,\psi \right\rangle \right\rangle
=\left\langle \phi \wedge \widetilde{\psi },x\right\rangle
\label{dcpvectors}
\end{equation}%
\begin{equation}
\left\langle \left\langle x,\phi \right\vert ,y\right\rangle =\left\langle
\phi ,\widetilde{x}\wedge y\right\rangle \hspace{0.5in}\left\langle
x,\left\vert \phi ,y\right\rangle \right\rangle =\left\langle x\wedge 
\widetilde{y},\phi \right\rangle  \label{dcpforms}
\end{equation}

Let $\left\{ e_{j}\right\} $ be any basis of $V$ and let $\left\{
\varepsilon ^{j}\right\} $ be its dual basis for $V^{\star }.$ And let us
introduce the pseudoscalars $e_{\wedge }=e_{1}\wedge e_{2}\wedge \cdots
\wedge e_{n}$ and $\varepsilon ^{\wedge }=\varepsilon ^{1}\wedge \varepsilon
^{2}\wedge \cdots \wedge \varepsilon ^{n}.$

\begin{equation}
\left\langle \varepsilon ^{\wedge },e_{\wedge }\right\rangle =1
\label{pairingofpseudo}
\end{equation}

For all basis vectors $e_{p_{1}},e_{p_{2}},\ldots ,e_{p_{\mu }}$ $\left(
1\leq p_{1}<p_{2}<\cdots <p_{\mu }\leq n\right) :$%
\begin{equation}
\left\langle e_{p_{1}}\wedge e_{p_{2}}\wedge \cdots e_{p_{\nu }},\varepsilon
^{\wedge }\right\vert =\left( -1\right) ^{\mu +p_{1}+p_{2}+\cdots p_{\mu
}}\varepsilon ^{\wedge }\left\{ \varepsilon ^{p_{1}},\varepsilon
^{p_{2}};\ldots ,\varepsilon ^{p_{\nu }}\right\}  \label{expansion1}
\end{equation}%
where $\varepsilon ^{\wedge }\left\{ \varepsilon ^{p_{1}},\varepsilon
^{p_{2}};\ldots ,\varepsilon ^{p_{\mu }}\right\} $ is the remaining $\left(
n-\mu \right) $-form once the $\mu $ dual basis forms $\varepsilon
^{p_{1}},\varepsilon ^{p_{2}},\ldots ,\varepsilon ^{p_{\mu }}$ are removed
from $\varepsilon ^{\wedge }$ but leaving the others ones in the order they
had.

For all dual basis forms $\varepsilon ^{q_{1}},\varepsilon ^{q_{2}},\ldots
,\varepsilon ^{q_{\nu }}$ $\left( 1\leq q_{1}<q_{2}<\cdots <q_{\nu }\leq
n\right) :$%
\begin{equation}
\left\langle \varepsilon ^{q_{1}}\wedge \varepsilon ^{q_{2}}\wedge \cdots
\varepsilon ^{q_{\nu }},e_{\wedge }\right\vert =\left( -1\right) ^{\nu
+q_{1}+q_{2}+\cdots q_{\nu }}e_{\wedge }\left\{ e_{q_{1}},e_{q_{2}},\ldots
,e_{q_{\nu }}\right\}  \label{expansion2}
\end{equation}%
where $e_{\wedge }\left\{ e_{q_{1}},e_{q_{2}},\ldots ,e_{q_{\nu }}\right\} $
is the remaining $\left( n-\nu \right) $-vector once the $\nu $ basis
vectors $e_{q_{1}},e_{q_{2}},\ldots ,e_{q_{\nu }}$ are removed from $%
e_{\wedge }$ but leaving the others ones in the order they had.

For all $\alpha \in \mathbb{R}:$%
\begin{equation}
\left\langle \left\langle \alpha ,\varepsilon ^{\wedge }\right\vert ,%
\widetilde{e}_{\wedge }\right\vert =\alpha \hspace{0.5in}\left\langle
\left\langle \alpha ,e_{\wedge }\right\vert ,\widetilde{\varepsilon }%
^{\wedge }\right\vert =\alpha  \label{expansionformula0}
\end{equation}

For all $v\in V$ and $\omega \in V^{\star }:$%
\begin{equation}
\left\langle \left\langle v,\varepsilon ^{\wedge }\right\vert ,\widetilde{e}%
_{\wedge }\right\vert =v\hspace{0.5in}\left\langle \left\langle \omega
,e_{\wedge }\right\vert ,\widetilde{\varepsilon }^{\wedge }\right\vert
=\omega  \label{expansionformula1}
\end{equation}

For all $v_{1},v_{2},\ldots ,v_{p}\in V$ and $\omega _{1},\omega _{2},\ldots
,\omega _{p}\in V^{\star }:$%
\begin{eqnarray}
\left\langle \left\langle v_{1}\wedge v_{2}\wedge \cdots \wedge
v_{p},\varepsilon ^{\wedge }\right\vert ,\widetilde{e}_{\wedge }\right\vert
&=&v_{1}\wedge v_{2}\wedge \cdots \wedge v_{p}  \notag \\
\left\langle \left\langle \omega _{1}\wedge \omega _{2}\wedge \cdots \wedge
\omega _{p},e_{\wedge }\right\vert ,\widetilde{\varepsilon }^{\wedge
}\right\vert &=&\omega _{1}\wedge \omega _{2}\wedge \cdots \wedge \omega _{p}
\label{expansionformula2}
\end{eqnarray}

For all $x\in \bigwedge V$ and $\phi \in \bigwedge V^{\star }:$%
\begin{equation}
\left\langle \left\langle x,\varepsilon ^{\wedge }\right\vert ,\widetilde{e}%
_{\wedge }\right\vert =x\hspace{0.5in}\left\langle \left\langle \phi
,e_{\wedge }\right\vert ,\widetilde{\varepsilon }^{\wedge }\right\vert =\phi
\label{expansionformula3}
\end{equation}

Notice that (\ref{scalarhomog1}), (\ref{scalarhomog2}), (\ref%
{contrhomogmultiv0}), (\ref{contrhomogmultiv}), (\ref{contrhomogmultif0})
and (\ref{contrhomogmultif}) allow us to identify $p$-forms with $p$%
-homogeneous multiforms, and $p$-vectors with $p$-homogeneous multivectors,
whenever in doing calculations with pairing or contractions.

The equations (\ref{expansionformula1}), (\ref{expansionformula2}) and (\ref%
{expansionformula3}) are respectively expansion-like formulas for vectors
and forms, simple $p$-vectors and simple $p$-forms, and multivectors and
multiforms. All of them involve only the duality left contraction mapping.
Of course, there exist similar others which involve only the duality right
contraction mapping.

\section{Metric extensor}

A metric tensor over $V$ is any covariant $2$-tensor over $V,$ say $g,$
which is symmetric and non-degenerate, i.e.,%
\begin{equation*}
g(v,w)=g(w,v)
\end{equation*}%
\begin{equation*}
\lbrack g(v,w)=0\text{ for all }w]\text{ implies }v=0_{V}
\end{equation*}

A crossed extensor\footnote{%
Any linear application from $V$ to $V^{\star }$ is called crossed extensor
over $V,$ the set of crossed extensors is a real vector space of finite
dimension, and it is denoted by $\func{ext}\left( V;V^{\star }\right) .$}
over $V,$ say $\gamma ,$ which is symmetric and one-to-one, i.e.,%
\begin{equation*}
\left\langle \gamma (v),w\right\rangle =\left\langle v,\gamma
(w)\right\rangle
\end{equation*}%
\begin{equation*}
\gamma (v)=\gamma (v^{\prime })\text{ implies }v=v^{\prime }
\end{equation*}%
will be called metric extensor over $V.$

\begin{theorem}
For each metric tensor $g$ there exists an unique metric extensor $\gamma $
such that for all $v,w\in V:g(v,w)=\left\langle \gamma (v),w\right\rangle ,$
and reciprocally.
\end{theorem}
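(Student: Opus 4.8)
The statement is really an equivalence of two packagings of the same data, so the plan is to exhibit the two passages $g\mapsto\gamma$ and $\gamma\mapsto g$ explicitly and check that each one lands in the asserted class of objects, these two passages being visibly inverse to one another at the level of formulas. For the passage $g\mapsto\gamma$, I would define, for each $v\in V$, the form $\gamma(v)\in V^{\star}$ by the prescription $\gamma(v)(w)=g(v,w)$ for all $w\in V$; since $g$ is linear in its second slot this is indeed an element of $V^{\star}$, and by (\ref{Fund1}) (the case $p=1$) this says precisely $\left\langle \gamma(v),w\right\rangle=g(v,w)$. For the reverse passage, given a symmetric one-to-one crossed extensor $\gamma$ I would simply \emph{define} $g(v,w):=\left\langle\gamma(v),w\right\rangle$. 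The whole proof is then the verification that the structural properties are preserved in both directions, together with a uniqueness argument; none of this is deep, and I expect the only genuine care needed is to keep straight which avatar of the duality pairing (form-with-vector versus vector-with-form) and which notion of symmetry is being used at each step.

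For the direction $g\mapsto\gamma$: linearity of $v\mapsto\gamma(v)$ follows from bilinearity of $g$ in its first argument; symmetry of $\gamma$ as a crossed extensor, i.e. $\left\langle\gamma(v),w\right\rangle=\left\langle v,\gamma(w)\right\rangle$, is immediate from $\left\langle\gamma(v),w\right\rangle=g(v,w)=g(w,v)=\left\langle\gamma(w),v\right\rangle$ together with the identification $\left\langle\gamma(w),v\right\rangle=\left\langle v,\gamma(w)\right\rangle$ of the pairing mappings. Injectivity of $\gamma$ comes from non-degeneracy of $g$: if $\gamma(v)=\gamma(v')$ then $g(v-v',w)=\left\langle\gamma(v-v'),w\right\rangle=0$ for all $w$, whence $v=v'$; since $\dim V=\dim V^{\star}$ this injectivity is equivalently bijectivity, so $\gamma$ is one-to-one in the required sense. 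Uniqueness is the cleanest part: if $\gamma_{1},\gamma_{2}$ both satisfy $\left\langle\gamma_{i}(v),w\right\rangle=g(v,w)$, then $\left\langle\gamma_{1}(v)-\gamma_{2}(v),w\right\rangle=0$ for every $w$, so by the non-degeneracy of the duality pairing recorded in Section 3 we get $\gamma_{1}(v)=\gamma_{2}(v)$ for all $v$.

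For the direction $\gamma\mapsto g$: with $g(v,w)=\left\langle\gamma(v),w\right\rangle$, bilinearity of $g$ follows from linearity of $\gamma$ and bilinearity of the duality pairing, so $g$ is a covariant $2$-tensor; symmetry of $g$ is exactly the symmetry hypothesis on $\gamma$, again using the coincidence of the pairing mappings to rewrite $\left\langle v,\gamma(w)\right\rangle$ as $\left\langle\gamma(w),v\right\rangle$; and non-degeneracy of $g$ follows by combining non-degeneracy of the pairing with injectivity of $\gamma$: if $\left\langle\gamma(v),w\right\rangle=0$ for all $w$ then $\gamma(v)=o$, hence $v=0$. Finally, starting from $g$, building $\gamma$, and rebuilding $g$ returns the same tensor by construction, and likewise in the other order by the uniqueness just proved, which is what ``and reciprocally'' asks for. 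The main obstacle, such as it is, is purely notational discipline with the four pairings and the two symmetries; there is no analytic or combinatorial difficulty.
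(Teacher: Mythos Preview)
Your proof is correct and follows the same logical skeleton as the paper's: construct $\gamma$ from $g$, verify symmetry and injectivity, prove uniqueness via non-degeneracy of the pairing; then reverse. The one noteworthy difference is that you define $\gamma(v)$ intrinsically as the functional $w\mapsto g(v,w)$, whereas the paper writes $\gamma(v)=g(v,e_{j})\varepsilon^{j}$ in a chosen dual pair of bases and then has to argue separately that this expression is basis-independent; your coordinate-free definition sidesteps that check entirely, which is a small but genuine economy.
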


\begin{proof}
We first prove the direct statement:

Given $g\in T^{2}V^{\star },$ we must prove the existence of such $\gamma
\in \func{ext}\left( V;V^{\star }\right) .$ Let%
\'{}%
s define $\gamma :V\longrightarrow V^{\star }$ such that $\gamma
(v)=g(v,e_{j})\varepsilon ^{j},$ where $\left\{ e_{j}\right\} $ is any basis
of $V$ and $\left\{ \varepsilon ^{j}\right\} $ is its dual basis for $%
V^{\star }$ (i.e., $\left\langle e_{j},\varepsilon ^{k}\right\rangle =\delta
_{j}^{k}$), and sum over $j$ from $1$ to $n$ is implicated. Since $g$ is
linear with respect to 2nd. variable, $g(v,e_{j})\varepsilon ^{j}$ is a form
that does not depend on the pair of bases $\left( \left\{ e_{j}\right\}
,\left\{ \varepsilon ^{j}\right\} \right) $ used for defining it. Hence, $%
\gamma $ is an application well defined. And, since $g$ is linear with
respect to 1st. variable, $\gamma $ is linear. Therefore, $\gamma \in \func{%
ext}\left( V;V^{\star }\right) .$

We prove that $\gamma $ is symmetric. Let $v,w\in V:$%
\begin{eqnarray*}
\left\langle \gamma (v),w\right\rangle &=&\left\langle g(v,e_{j})\varepsilon
^{j},w\right\rangle =g(v,e_{j})\left\langle \varepsilon ^{j},w\right\rangle
=g(v,\left\langle \varepsilon ^{j},w\right\rangle e_{j})=g(v,w), \\
\left\langle v,\gamma (w)\right\rangle &=&\left\langle
v,g(w,e_{j})\varepsilon ^{j}\right\rangle =g(w,e_{j})\left\langle
v,\varepsilon ^{j}\right\rangle =g(w,\left\langle v,\varepsilon
^{j}\right\rangle e_{j})=g(w,v),
\end{eqnarray*}%
then, since $g$ is symmetric, $\left\langle \gamma (v),w\right\rangle
=\left\langle v,\gamma (w)\right\rangle .$

We prove that $\gamma $ is one-to-one. Take $v,v^{\prime },w\in V:$%
\begin{eqnarray*}
\gamma (v)=\gamma (v^{\prime }) &\Longrightarrow &\left\langle \gamma
(v),w\right\rangle =\left\langle \gamma (v^{\prime }),w\right\rangle
\Longrightarrow g(v,w)=g(v^{\prime },w) \\
&\Longrightarrow &g(v-v^{\prime },w)=0\text{ for all }w
\end{eqnarray*}%
then, since $g$ is non-degenerate, $v-v^{\prime }=0_{V}.$ Thus, $\gamma
(v)=\gamma (v^{\prime })$ implies $v=v^{\prime }.$

In order to prove the uniqueness of such $\gamma ,$ suppose that there
exists another $\gamma ^{\prime }\in \func{ext}\left( V;V^{\star }\right) $
which satisfies $\left\langle \gamma ^{\prime }(v),w\right\rangle =g(v,w).$
Since $g(v,w)=\left\langle \gamma (v),w\right\rangle ,$ we have $%
\left\langle \gamma ^{\prime }(v),w\right\rangle =\left\langle \gamma
(v),w\right\rangle ,$ but%
\begin{equation*}
\left\langle \gamma ^{\prime }(v),w\right\rangle =\left\langle \gamma
(v),w\right\rangle \Longrightarrow \left\langle \gamma ^{\prime }(v)-\gamma
(v),w\right\rangle =0\text{ for all }w
\end{equation*}%
then, because of non-degeneracy of $\left\langle \left. {}\right. ,\left.
{}\right. \right\rangle ,$ we have $\gamma ^{\prime }(v)-\gamma
(v)=0_{V^{\star }},$ i.e., $\gamma ^{\prime }(v)=\gamma (v).$

We now prove the reciprocal statement:

Given $\gamma \in \func{ext}\left( V;V^{\star }\right) ,$ we must prove the
existence of such $g\in T^{2}V^{\star }.$ Let%
\'{}%
s define $g:V\times V\longrightarrow \mathbb{R}$ such that $%
g(v,w)=\left\langle \gamma (v),w\right\rangle .$ Because of linearity of $%
\gamma $ and bilinearity of $\left\langle \left. {}\right. ,\left. {}\right.
\right\rangle ,$ it follows the bilinearity of $g.$

We prove that $g$ is symmetric. Let $v,w\in V:$%
\begin{equation*}
g(v,w)=\left\langle \gamma (v),w\right\rangle ,\text{\quad }%
g(w,v)=\left\langle \gamma (w),v\right\rangle =\left\langle v,\gamma
(w)\right\rangle
\end{equation*}%
then, since $\gamma $ is symmetric, $g(v,w)=g(w,v).$

We prove that $g$ is non-degenerate. Take $v,v^{\prime },w\in V:$%
\begin{equation*}
g(v,w)=0\text{ for all }w\Longrightarrow \left\langle \gamma
(v),w\right\rangle =0\text{ for all }w
\end{equation*}%
then, because of non-degeneracy of $\left\langle \left. {}\right. ,\left.
{}\right. \right\rangle $ and linearity of $\gamma ,$ we have $\gamma
(v)=o_{V^{\star }}=\gamma (0_{V}).$ And, since $\gamma $ is one-to-one, $%
v=0_{V}.$ Thus, $[g(v,w)=0$ for all $w]$ implies $v=0_{V}.$

In order to prove the uniqueness of such $g,$ suppose that there exists
another $g^{\prime }\in T^{2}V^{\star }$ which satisfies $g^{\prime
}(v,w)=\left\langle \gamma (v),w\right\rangle .$ Since $g(v,w)=\left\langle
\gamma (v),w\right\rangle ,$ we have $g^{\prime }(v,w)=g(v,w).$
\end{proof}

\begin{proposition}
$\gamma $ is invertible, and $\gamma ^{-1}$ is a metric extensor over $%
V^{\star }.$
\end{proposition}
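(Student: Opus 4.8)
The plan is to treat this as an exercise in finite-dimensional linear algebra together with one use of the symmetry of $\gamma$. First I would establish that $\gamma$ is a bijection, then exhibit $\gamma^{-1}$ as a crossed extensor over $V^{\star}$ and check the two defining properties of a metric extensor, namely symmetry and injectivity.

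For invertibility: $\gamma\in\func{ext}(V;V^{\star})$ is by definition linear, and by hypothesis it is one-to-one. Since $\dim V=\dim V^{\star}=n<\infty$, an injective linear map between vector spaces of equal finite dimension is automatically surjective, hence bijective; therefore $\gamma^{-1}\colon V^{\star}\to V$ exists and is itself linear (the inverse of a linear isomorphism is linear). Making the canonical identification $V^{\star\star}\cong V$, the map $\gamma^{-1}$ is a linear map from $V^{\star}$ to $(V^{\star})^{\star}$, i.e.\ a crossed extensor over $V^{\star}$. Its one-to-one-ness is then immediate, being the inverse of the bijection $\gamma$: if $\gamma^{-1}(\omega)=\gamma^{-1}(\omega')$, applying $\gamma$ gives $\omega=\omega'$.

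It remains to verify that $\gamma^{-1}$ is symmetric, i.e.\ $\left\langle\gamma^{-1}(\omega),\eta\right\rangle=\left\langle\omega,\gamma^{-1}(\eta)\right\rangle$ for all $\omega,\eta\in V^{\star}$, where I use that the duality pairing satisfies $\left\langle v,\omega\right\rangle=\left\langle\omega,v\right\rangle$. Using surjectivity of $\gamma$, write $\omega=\gamma(v)$ and $\eta=\gamma(w)$, so that $v=\gamma^{-1}(\omega)$ and $w=\gamma^{-1}(\eta)$. Then $\left\langle\gamma^{-1}(\omega),\eta\right\rangle=\left\langle v,\gamma(w)\right\rangle$ and $\left\langle\omega,\gamma^{-1}(\eta)\right\rangle=\left\langle\gamma(v),w\right\rangle$, and these two scalars coincide precisely because $\gamma$ is symmetric, $\left\langle\gamma(v),w\right\rangle=\left\langle v,\gamma(w)\right\rangle$. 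Hence $\gamma^{-1}$ is a symmetric, one-to-one crossed extensor over $V^{\star}$, that is, a metric extensor over $V^{\star}$.

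The only delicate point, and it is a minor one, is bookkeeping with the identification $V^{\star\star}\cong V$ so that "metric extensor over $V^{\star}$" has a literal meaning, and correctly invoking the symmetry of the duality pairing when rewriting $\left\langle\gamma^{-1}(\omega),\eta\right\rangle$ as the pairing of a vector with a form; the passage from injectivity to bijectivity and the symmetry computation are both routine.
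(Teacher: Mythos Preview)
Your proof is correct and follows essentially the same route as the paper: surjectivity of $\gamma$ via the finite-dimensional rank argument, injectivity of $\gamma^{-1}$ for free as the inverse of a bijection, and symmetry of $\gamma^{-1}$ by writing arbitrary forms as $\omega=\gamma(v)$, $\eta=\gamma(w)$ and invoking the symmetry of $\gamma$. The only addition you make is the explicit mention of the identification $V^{\star\star}\cong V$, which the paper leaves implicit.
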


\begin{proof}
We first prove that $\gamma $ is invertible:

We only must check that $\gamma $ is onto. Since $\gamma $ is one-to-one, $%
\ker \gamma =\left\{ 0_{1}\right\} .$ And, since $\dim \left( \ker \gamma
\right) +\dim \left( \limfunc{im}\gamma \right) =\dim V,$ we have $\dim
\left( \limfunc{im}\gamma \right) =n,$ i.e., $\gamma $ is onto.

We now prove that $\gamma ^{-1}$ is a metric extensor over $V^{\star }:$

We only must check that $\gamma ^{-1}$ is symmetric because $\gamma ^{-1}$
is one-to-one too.

Let $\omega ,\sigma \in V^{\star },$ there exist $v,w\in V$ such that $%
\omega =\gamma (v)$ and $\sigma =\gamma (w)$ or, equivalently, $v=\gamma
^{-1}(\omega )$ and $w=\gamma ^{-1}(\sigma ).$ We can write%
\begin{equation*}
\left\langle \gamma ^{-1}(\omega ),\sigma \right\rangle =\left\langle
v,\gamma (w)\right\rangle \text{ and }\left\langle \omega ,\gamma
^{-1}(\sigma )\right\rangle =\left\langle \gamma (v),w\right\rangle
\end{equation*}%
then, since $\gamma $ is symmetric, $\left\langle \gamma ^{-1}(\omega
),\sigma \right\rangle =\left\langle \omega ,\gamma ^{-1}(\sigma
)\right\rangle .$
\end{proof}

\subsection{Extension of the metric extensor}

We define the extensions of $\gamma $ to $\func{ext}\left(
\bigwedge^{p}V;\bigwedge^{p}V^{\star }\right) $ and to $\func{ext}\left(
\bigwedge V;\bigwedge V^{\star }\right) .$

The extension of $\gamma $ to $\func{ext}\left(
\bigwedge^{p}V;\bigwedge^{p}V^{\star }\right) $ is $\underline{\gamma }%
^{I}:\bigwedge^{p}V\longrightarrow \bigwedge^{p}V^{\star }$ defined by%
\begin{equation}
\underline{\gamma }^{I}(x_{p})=\left\{ 
\begin{array}{ll}
x_{0}, & p=0 \\ 
\gamma (x_{1}), & p=1 \\ 
\dfrac{1}{p!}\left\langle x_{p},\varepsilon ^{j_{1}}\wedge \cdots \wedge
\varepsilon ^{j_{p}}\right\rangle \gamma (e_{j_{1}})\wedge \cdots \wedge
\gamma (e_{j_{p}}) & p\geq 2%
\end{array}%
\right.  \label{extensiondef1}
\end{equation}%
where $\left\{ e_{j}\right\} $ is any basis of $V$ and $\left\{ \varepsilon
^{j}\right\} $ is its dual basis for $V^{\star },$ and sums over the indices 
$j_{1},\ldots ,j_{p}$ from $1$ to $n$ are implicated.

We emphasize that $\left\langle x_{p},\varepsilon ^{j_{1}}\wedge \cdots
\wedge \varepsilon ^{j_{p}}\right\rangle \gamma (e_{j_{1}})\wedge \cdots
\wedge \gamma (e_{j_{p}})$ is a $p$-form which does not depend on the pair
of bases $\left( \left\{ e_{j}\right\} ,\left\{ \varepsilon ^{j}\right\}
\right) $ used for defining it, because of bilinearity of $\left\langle
\left. {}\right. ,\left. {}\right. \right\rangle $ and linearity of $\gamma
. $ Therefore, $\underline{\gamma }^{I}(x_{p})$ is a right invariant $p$%
-form which depends only on $x_{p},$ and so $\underline{\gamma }^{I}$ is
well defined. The linearity of $\underline{\gamma }^{I}$ follows from the
bilinearity of $\left\langle \left. {}\right. ,\left. {}\right.
\right\rangle .$ Thus, $\underline{\gamma }^{I}$ is a well-defined extensor
belonging to $\func{ext}\left( \bigwedge^{p}V;\bigwedge^{p}V^{\star }\right)
.$

The extension of $\gamma $ to $\func{ext}\left( \bigwedge V;\bigwedge
V^{\star }\right) $ is $\underline{\gamma }^{II}:\bigwedge V\longrightarrow
\bigwedge V^{\star }$ defined by%
\begin{equation}
\underline{\gamma }^{II}(x)\in \bigwedge V^{\star }\text{ such that }\left[ 
\underline{\gamma }^{II}(x)\right] ^{k}=\underline{\gamma }^{I}(x_{k}),\text{
for all }k=0,1,\ldots ,n  \label{extensiondef2}
\end{equation}

Notice that the linearity of $\underline{\gamma }^{II}$ follows from the
linearity of $\left[ \left. {}\right. \right] _{k}$ and the linearity of $%
\underline{\gamma }^{I}.$ Thus, $\underline{\gamma }^{II}$ is a well-defined
extensor belonging to $\func{ext}\left( \bigwedge V;\bigwedge V^{\star
}\right) .$

Because of no confusion could arise, we shall use the same notation $%
\underline{\gamma }$ for both $\underline{\gamma }^{I}$ and $\underline{%
\gamma }^{II},$ and we shall simply speak about a single extension of $%
\gamma .$

We present the properties for the extension of the metric extensor.

For all $x\in \bigwedge V:$%
\begin{equation}
\text{if }x\text{ is }p\text{-homogeneous, then }\underline{\gamma }\left(
x\right) \text{ is }p\text{-homogeneous}  \label{extension1}
\end{equation}

For all $x,y\in \bigwedge V:$%
\begin{equation}
\underline{\gamma }(x\wedge y)=\underline{\gamma }(x)\wedge \underline{%
\gamma }(y)  \label{extension2}
\end{equation}

For all $x\in \bigwedge V:$%
\begin{equation}
\widehat{\underline{\gamma }(x)}=\underline{\gamma }(\widehat{x})\text{ and }%
\widetilde{\underline{\gamma }(x)}=\underline{\gamma }(\widetilde{x})
\label{extension3}
\end{equation}%
i.e. grade involution and reversion operators commutate with extension
operator.

The extension of $\gamma $ is invertible, and%
\begin{equation}
\left\{ \underline{\gamma }\right\} ^{-1}=\underline{\left\{ \gamma
^{-1}\right\} }  \label{extension4}
\end{equation}%
i.e., the inverse of extension equals the extension of inverse. Thus, we
could use the simpler symbol $\underline{\gamma }^{-1}$ to mean both of them.

We define to follow the so-called reciprocal bases. Let $\left\{
e_{j}\right\} $ and $\left\{ \varepsilon ^{j}\right\} $ be respectively any
basis of $V$ and its dual basis for $V^{\star }.$ i.e., $\left\langle
e_{j},\varepsilon ^{k}\right\rangle =\delta _{j}^{k}.$ The sets $\left\{
e^{j}\right\} $ and $\left\{ \varepsilon _{j}\right\} $ such that $%
e^{j}=\gamma ^{-1}(\varepsilon ^{j})$ and $\varepsilon _{j}=\gamma (e_{j})$
are respectively called reciprocal bases of $\left\{ e_{j}\right\} $ and $%
\left\{ \varepsilon ^{j}\right\} ,$ because of $e_{j}\cdot e^{k}$ $%
=\varepsilon ^{k}\cdot \varepsilon _{j}=\delta _{j}^{k},$ as we prove below%
\begin{eqnarray*}
e_{j}\cdot e^{k} &=&\left\langle \gamma (e_{j}),e^{k}\right\rangle
=\left\langle e_{j},\gamma (e^{k})\right\rangle =\left\langle
e_{j},\varepsilon ^{k}\right\rangle =\delta _{j}^{k}, \\
\varepsilon ^{k}\cdot \varepsilon _{j} &=&\left\langle \gamma
^{-1}(\varepsilon ^{k}),\varepsilon _{j}\right\rangle =\left\langle
\varepsilon ^{k},\gamma ^{-1}(\varepsilon _{j})\right\rangle =\left\langle
\varepsilon ^{k},e_{j}\right\rangle =\delta _{j}^{k}.
\end{eqnarray*}

The sets $\left\{ e_{j_{1}}\wedge \cdots \wedge e_{j_{p}}\right\} _{1\leq
j_{1}<\cdots <j_{p}\leq n}$ and $\left\{ \varepsilon ^{j_{1}}\wedge \cdots
\wedge \varepsilon ^{j_{p}}\right\} _{_{1\leq j_{1}<\cdots <j_{p}\leq n}}$
are respectively bases for $\bigwedge^{p}V$ and $\bigwedge^{p}V^{\star }.$
Their reciprocal bases are respectively the sets $\left\{ e^{j_{1}}\wedge
\cdots \wedge e^{j_{p}}\right\} _{1\leq j_{1}<\cdots <j_{p}\leq n}$ and $%
\left\{ \varepsilon _{j_{1}}\wedge \cdots \wedge \varepsilon
_{j_{p}}\right\} _{_{1\leq j_{1}<\cdots <j_{p}\leq n}}.$ We disclose their
reciprocity-like properties$\footnote[1]{%
Recall the so-called generalized permutation symbol of order $p,$%
\begin{equation*}
\sigma _{j_{1}\ldots j_{p}}^{k_{1}\ldots k_{p}}=\det \left( 
\begin{array}{ccc}
\delta _{j_{1}}^{k_{1}} & \cdots & \delta _{j_{p}}^{k_{1}} \\ 
\vdots &  & \vdots \\ 
\delta _{j_{1}}^{k_{p}} & \cdots & \delta _{j_{p}}^{k_{p}}%
\end{array}%
\right) ,\text{ with }j_{1},\ldots ,j_{p}\text{ and }k_{1},\ldots ,k_{p}%
\text{ running from }1\text{ to }n
\end{equation*}%
}$ below%
\begin{eqnarray*}
\left( e_{j_{1}}\wedge \cdots e_{j_{p}}\right) \cdot \left( e^{k_{1}}\wedge
\cdots e^{k_{p}}\right) &=&\det \left( 
\begin{array}{ccc}
\delta _{_{j_{1}}}^{^{k_{1}}} & \cdots & \delta _{j_{1}}^{k_{p}} \\ 
\cdots &  & \cdots \\ 
\delta _{_{j_{p}}}^{k_{1}} & \cdots & \delta _{j_{p}}^{k_{p}}%
\end{array}%
\right) =\sigma _{j_{1}\ldots j_{p}}^{k_{1}\ldots k_{p}}, \\
\left( \varepsilon ^{k_{1}}\wedge \cdots \varepsilon ^{k_{p}}\right) \cdot
\left( \varepsilon _{j_{1}}\wedge \cdots \varepsilon _{j_{p}}\right) &=&\det
\left( 
\begin{array}{ccc}
\delta _{j_{1}}^{k_{1}} & \cdots & \delta _{j_{p}}^{k_{1}} \\ 
\cdots &  & \cdots \\ 
\delta _{j_{1}}^{k_{p}} & \cdots & \delta _{j_{p}}^{k_{p}}%
\end{array}%
\right) =\sigma _{j_{1}\ldots j_{p}}^{k_{1}\ldots k_{p}}.
\end{eqnarray*}

In particular, the unitary sets $\left\{ e_{\wedge }\right\} $ and $\left\{
\varepsilon ^{\wedge }\right\} ,$ where $e_{\wedge }=e_{1}\wedge \cdots
\wedge e_{n}$ and $\varepsilon ^{\wedge }=\varepsilon ^{1}\wedge \cdots
\wedge \varepsilon ^{n},$ are respectively bases for $\bigwedge^{n}V$ and $%
\bigwedge^{n}V^{\star }.$ Their reciprocal bases are respectively $\left\{
e^{\wedge }\right\} $ and $\left\{ \varepsilon _{\wedge }\right\} ,$ with $%
e^{\wedge }=e^{1}\wedge \cdots \wedge e^{n}=\underline{\gamma }%
^{-1}(\varepsilon ^{\wedge })$ and $\varepsilon _{\wedge }=\varepsilon
_{1}\wedge \cdots \wedge \varepsilon _{n}=\underline{\gamma }(e_{\wedge }).$

We notice to follow other two remarkable propositions for the metric
extensor.

\begin{proposition}
$\underline{\gamma }$ is a metric extensor over $\bigwedge V.$
\end{proposition}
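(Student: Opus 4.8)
The plan is to verify the two defining properties of a metric extensor over $\bigwedge V$ — symmetry and injectivity — for $\underline{\gamma}$ with respect to the duality pairing $\langle\cdot,\cdot\rangle$. Injectivity is immediate: by (\ref{extension4}) the extension $\underline{\gamma}$ is invertible, hence in particular one-to-one.

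For symmetry one must show $\langle\underline{\gamma}(x),y\rangle=\langle x,\underline{\gamma}(y)\rangle$ for all $x,y\in\bigwedge V$. First I would reduce to homogeneous arguments: the pairing (\ref{pairingII}) is a sum over grades, and by (\ref{extensiondef2}) together with (\ref{extension1}) one has $[\underline{\gamma}(x)]^{k}=\underline{\gamma}(x_{k})$, so both sides split as $\sum_{k=0}^{n}\langle\underline{\gamma}(x_{k}),y_{k}\rangle$ respectively $\sum_{k=0}^{n}\langle x_{k},\underline{\gamma}(y_{k})\rangle$. Thus it suffices to prove $\langle\underline{\gamma}(x_{p}),y_{p}\rangle=\langle x_{p},\underline{\gamma}(y_{p})\rangle$ for each pair of $p$-vectors. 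The cases $p=0$ (both sides equal $x_{0}y_{0}$) and $p=1$ (symmetry of $\gamma$ itself) are trivial.

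For $p\geq 2$, by bilinearity of the pairing and linearity of $\underline{\gamma}$ it is enough to take $x_{p}=v_{1}\wedge\cdots\wedge v_{p}$ and $y_{p}=w_{1}\wedge\cdots\wedge w_{p}$ simple. Then (\ref{extension2}) gives $\underline{\gamma}(x_{p})=\gamma(v_{1})\wedge\cdots\wedge\gamma(v_{p})$ and $\underline{\gamma}(y_{p})=\gamma(w_{1})\wedge\cdots\wedge\gamma(w_{p})$, so by (\ref{Fund3}) and the symmetry $\langle\phi,x\rangle=\langle x,\phi\rangle$ of the pairing one obtains $\langle\underline{\gamma}(x_{p}),y_{p}\rangle=\det\bigl(\langle\gamma(v_{i}),w_{j}\rangle\bigr)$ and $\langle x_{p},\underline{\gamma}(y_{p})\rangle=\langle\underline{\gamma}(y_{p}),x_{p}\rangle=\det\bigl(\langle\gamma(w_{i}),v_{j}\rangle\bigr)$. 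Since $\gamma$ is symmetric, $\langle\gamma(w_{i}),v_{j}\rangle=\langle w_{i},\gamma(v_{j})\rangle=\langle\gamma(v_{j}),w_{i}\rangle$, so the second matrix is the transpose of the first; the two determinants therefore coincide, which proves symmetry.

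The argument is essentially routine once the grade-decomposition reduction and the multiplicativity (\ref{extension2}) are available; the only point needing a little care is the legitimacy of passing to simple $p$-vectors, which rests on bilinearity of $\langle\cdot,\cdot\rangle$ and linearity of $\underline{\gamma}$. I therefore do not anticipate any serious obstacle.
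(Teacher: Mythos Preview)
Your proof is correct and follows essentially the same route as the paper: reduce symmetry to scalars and simple $p$-vectors via bilinearity, then apply (\ref{extension2}), (\ref{Fund3}) and the symmetry of $\gamma$. The only minor difference is that you obtain injectivity directly from the invertibility (\ref{extension4}), whereas the paper gives a separate argument using the just-established symmetry together with non-degeneracy of the pairing; your shortcut is cleaner.
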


\begin{proof}
We first prove that $\underline{\gamma }$ is symmetric. We only need to
check that $\underline{\gamma }$ satisfies the condition of symmetry for
scalars and simple $p$-vectors.

Let $\alpha ,\beta \in \mathbb{R}.$ Taking into account how $\underline{%
\gamma }$ acts on scalars, we get%
\begin{equation*}
\left\langle \underline{\gamma }(\alpha ),\beta \right\rangle =\left\langle
\alpha ,\beta \right\rangle =\left\langle \alpha ,\underline{\gamma }(\beta
)\right\rangle
\end{equation*}

Let $v_{1},\ldots ,v_{p},w_{1},\ldots ,w_{p}\in V.$ Using (\ref{extension2}%
), (\ref{Fund3}) and the symmetry of $\gamma ,$ we have%
\begin{eqnarray*}
\left\langle \underline{\gamma }(v_{1}\wedge \cdots v_{p}),w_{1}\wedge
\cdots w_{p}\right\rangle &=&\left\langle \gamma (v_{1})\wedge \cdots \gamma
(v_{p}),w_{1}\wedge \cdots w_{p}\right\rangle \\
&=&\det \left( \left\langle \gamma (v_{i}),w_{j}\right\rangle \right) =\det
\left( \left\langle v_{i},\gamma (w_{j})\right\rangle \right) \\
&=&\left\langle v_{1}\wedge \cdots v_{p},\gamma (w_{1})\wedge \cdots \gamma
(w_{p})\right\rangle \\
&=&\left\langle v_{1}\wedge \cdots v_{p},\underline{\gamma }(w_{1}\wedge
\cdots w_{p})\right\rangle .
\end{eqnarray*}

Then, the symmetry of $\underline{\gamma }$ for multivectors follows at once
from the linearity of $\underline{\gamma }$ and the bilinearity of $%
\left\langle \left. {}\right. ,\left. {}\right. \right\rangle ,$ taking into
account (\ref{scalarhomog1}), (\ref{scalarhomog2}) and (\ref{extension1}).

We now prove that $\underline{\gamma }$ is one-to-one. Let $x,y,z\in
\bigwedge V:$%
\begin{eqnarray*}
\underline{\gamma }(x) &=&\underline{\gamma }(y)\Longrightarrow \left\langle 
\underline{\gamma }(x),z\right\rangle =\left\langle \underline{\gamma }%
(y),z\right\rangle \Longrightarrow \left\langle \underline{\gamma }%
(x-y),z\right\rangle =0 \\
&\Longrightarrow &\left\langle x-y,\underline{\gamma }(z)\right\rangle =0%
\text{ for all }\underline{\gamma }(z)
\end{eqnarray*}%
then, because of non-degeneracy of $\left\langle \left. {}\right. ,\left.
{}\right. \right\rangle ,$ it follows that $x-y=0.$ Therefore, $\underline{%
\gamma }(x)=\underline{\gamma }(y)$ implies $x=y.$
\end{proof}

\begin{proposition}
$\underline{\gamma }^{-1}$ is a metric extensor over $\bigwedge V^{\star }.$
\end{proposition}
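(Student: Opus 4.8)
The plan is to mirror the proof of the preceding proposition (that $\underline{\gamma}$ is a metric extensor over $\bigwedge V$), simply transporting everything along $\underline{\gamma}^{-1}$. Recall that by (\ref{extension4}) we may write $\underline{\gamma}^{-1}=\underline{\{\gamma^{-1}\}}$, i.e. $\underline{\gamma}^{-1}$ is itself the extension of a metric extensor — namely $\gamma^{-1}$, which is a metric extensor over $V^{\star}$ by Proposition 3 (the second proposition above). So in principle one could invoke Proposition 5 (the ``$\underline{\gamma}$ is a metric extensor over $\bigwedge V$'' proposition) with $V$ replaced by $V^{\star}$ and $\gamma$ replaced by $\gamma^{-1}$, and be done. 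But since the duality pairing on $\bigwedge V^{\star}\times\bigwedge V$ is the same object in either reading, it is cleaner to give the short direct argument.

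\emph{First} I would check symmetry. It suffices, by linearity of $\underline{\gamma}^{-1}$ and bilinearity and the homogeneity identities (\ref{scalarhomog1}), (\ref{scalarhomog2}), (\ref{extension1}), to verify the symmetry condition on scalars and on simple $p$-forms. On scalars it is immediate since $\underline{\gamma}^{-1}$ acts as the identity there. For $\omega_{1},\ldots,\omega_{p},\sigma_{1},\ldots,\sigma_{p}\in V^{\star}$, using $\underline{\gamma}^{-1}=\underline{\{\gamma^{-1}\}}$ together with (\ref{extension2}) applied to $\gamma^{-1}$, then (\ref{Fund3}), and finally the symmetry of $\gamma^{-1}$ (Proposition 3):
\begin{eqnarray*}
\left\langle \underline{\gamma}^{-1}(\omega_{1}\wedge\cdots\omega_{p}),\sigma_{1}\wedge\cdots\sigma_{p}\right\rangle
&=&\left\langle \gamma^{-1}(\omega_{1})\wedge\cdots\gamma^{-1}(\omega_{p}),\sigma_{1}\wedge\cdots\sigma_{p}\right\rangle \\
&=&\det\left(\left\langle \gamma^{-1}(\omega_{i}),\sigma_{j}\right\rangle\right)
=\det\left(\left\langle \omega_{i},\gamma^{-1}(\sigma_{j})\right\rangle\right) \\
&=&\left\langle \omega_{1}\wedge\cdots\omega_{p},\underline{\gamma}^{-1}(\sigma_{1}\wedge\cdots\sigma_{p})\right\rangle,
\end{eqnarray*}
and the general case follows by the linearity/bilinearity reduction noted above.

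\emph{Next} I would check that $\underline{\gamma}^{-1}$ is one-to-one, by exactly the argument used for $\underline{\gamma}$: if $\underline{\gamma}^{-1}(\phi)=\underline{\gamma}^{-1}(\psi)$ then for every $y\in\bigwedge V$ we get $\left\langle \underline{\gamma}^{-1}(\phi-\psi),y\right\rangle=0$; now $\underline{\gamma}$ is onto (it is invertible, by (\ref{extension4}) or by Proposition 5 plus the argument of Proposition 3), so as $y$ ranges over $\bigwedge V$, $\underline{\gamma}(y)$ ranges over all of $\bigwedge V^{\star}$, and writing $\left\langle \underline{\gamma}^{-1}(\phi-\psi),y\right\rangle=\left\langle \phi-\psi,\underline{\gamma}(y)\right\rangle$ via symmetry of $\underline{\gamma}$ — wait, more simply: $\underline{\gamma}^{-1}(\phi-\psi)\in\bigwedge V^{\star}$ pairs to zero against every $y\in\bigwedge V$, so by non-degeneracy of $\left\langle\left.{}\right.,\left.{}\right.\right\rangle$ it is $o$; since $\underline{\gamma}^{-1}$ is injective (it is the inverse of the bijection $\underline{\gamma}$), $\phi=\psi$. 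In fact injectivity of $\underline{\gamma}^{-1}$ is automatic from it being an inverse map, so only symmetry requires real work.

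\textbf{The main obstacle} is essentially bookkeeping rather than mathematics: one must be careful that the homogeneity-reduction step (reducing symmetry on all of $\bigwedge V^{\star}$ to symmetry on homogeneous pieces, and then to simple $p$-forms) is legitimate here, which it is because (\ref{scalarhomog2}) kills cross-grade pairings and (\ref{extension1}) (read for $\gamma^{-1}$) preserves grade, so $\left\langle\underline{\gamma}^{-1}(\phi),\psi\right\rangle=\sum_{p}\left\langle\underline{\gamma}^{-1}([\phi]^{p}),([\psi]^{p})\right\rangle$ as in the proof of Proposition 5. There is no genuine difficulty beyond invoking (\ref{extension4}) to identify $\underline{\gamma}^{-1}$ with $\underline{\{\gamma^{-1}\}}$ so that (\ref{extension2}) and (\ref{Fund3}) become available.
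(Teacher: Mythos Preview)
Your symmetry argument is essentially identical to the paper's: both reduce to scalars and simple $p$-forms via (\ref{scalarhomog1}), (\ref{scalarhomog2}), (\ref{extension1}), and then use (\ref{extension2}) for $\gamma^{-1}$, (\ref{Fund3}), and the symmetry of $\gamma^{-1}$.

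For injectivity there is a small wobble in your write-up: the line ``$\underline{\gamma}^{-1}(\phi-\psi)\in\bigwedge V^{\star}$ pairs to zero against every $y\in\bigwedge V$'' is a type error, since $\underline{\gamma}^{-1}(\phi-\psi)$ lies in $\bigwedge V$, not $\bigwedge V^{\star}$; the pairing $\langle\cdot,\cdot\rangle$ needs one multiform and one multivector. You immediately abandon that route anyway and land on the correct and cleanest observation: $\underline{\gamma}^{-1}$ is the inverse of the bijection $\underline{\gamma}$, hence injective automatically. The paper instead mirrors its argument for $\underline{\gamma}$: from $\underline{\gamma}^{-1}(\phi)=\underline{\gamma}^{-1}(\chi)$ it pairs against an arbitrary $\psi\in\bigwedge V^{\star}$, uses the just-proved symmetry to rewrite $\langle\underline{\gamma}^{-1}(\phi-\chi),\psi\rangle=\langle\phi-\chi,\underline{\gamma}^{-1}(\psi)\rangle$, and then invokes non-degeneracy of $\langle\cdot,\cdot\rangle$ (implicitly using that $\underline{\gamma}^{-1}$ is onto). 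Your final remark is shorter and avoids this detour; just delete the confused intermediate sentences.
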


\begin{proof}
We first prove that $\underline{\gamma }^{-1}$ is symmetric. It only must be
proved that $\underline{\gamma }^{-1}$ satisfies the condition of symmetry
for scalars and simple $p$-forms.

Let $\alpha ,\beta \in \mathbb{R}.$ Taking into account how $\underline{%
\gamma }^{-1}$ acts on scalars, we get%
\begin{equation*}
\left\langle \underline{\gamma }^{-1}(\alpha ),\beta \right\rangle
=\left\langle \alpha ,\beta \right\rangle =\left\langle \alpha ,\underline{%
\gamma }^{-1}(\beta )\right\rangle
\end{equation*}

Let $\omega _{1},\ldots ,\omega _{p},\sigma _{1},\ldots ,\sigma _{p}\in
V^{\star }.$ Using the analogous equation to (\ref{extension2}) for $\gamma
^{-1},$ (\ref{Fund3}) and the symmetry of $\gamma ^{-1},$ we have%
\begin{eqnarray*}
\left\langle \underline{\gamma }^{-1}(\omega _{1}\wedge \cdots \omega
_{p}),\sigma _{1}\wedge \cdots \sigma _{p}\right\rangle &=&\left\langle
\gamma ^{-1}(\omega _{1})\wedge \cdots \gamma ^{-1}(\omega _{p}),\sigma
_{1}\wedge \cdots \sigma _{p}\right\rangle \\
&=&\det \left( \left\langle \gamma ^{-1}(\omega _{i}),\sigma
_{j}\right\rangle \right) =\det \left( \left\langle \omega _{i},\gamma
^{-1}(\sigma _{j})\right\rangle \right) \\
&=&\left\langle \omega _{1}\wedge \cdots \omega _{p},\gamma ^{-1}(\sigma
_{1})\wedge \cdots \gamma ^{-1}(\sigma _{p})\right\rangle \\
&=&\left\langle \omega _{1}\wedge \cdots \omega _{p},\underline{\gamma }%
^{-1}(\sigma _{1}\wedge \cdots \sigma _{p})\right\rangle .
\end{eqnarray*}

Then, the symmetry of $\underline{\gamma }^{-1}$ for multiforms follows
inmediately from the linearity of $\underline{\gamma }^{-1}$ and the
bilinearity of $\left\langle \left. {}\right. ,\left. {}\right.
\right\rangle ,$ considering (\ref{scalarhomog1}), (\ref{scalarhomog2}) and (%
\ref{extension1}).

We now prove that $\underline{\gamma }^{-1}$ is one-to-one. Let $\phi ,\chi
,\psi \in \bigwedge V^{\star }:$%
\begin{eqnarray*}
\underline{\gamma }^{-1}(\phi ) &=&\underline{\gamma }^{-1}(\chi
)\Longrightarrow \left\langle \underline{\gamma }^{-1}(\phi ),\psi
\right\rangle =\left\langle \underline{\gamma }^{-1}(\chi ),\psi
\right\rangle \Longrightarrow \left\langle \underline{\gamma }^{-1}(\phi
-\chi ),\psi \right\rangle =0 \\
&\Longrightarrow &\left\langle \phi -\chi ,\underline{\gamma }^{-1}(\psi
)\right\rangle =0\text{ for all }\underline{\gamma }^{-1}(\psi )
\end{eqnarray*}%
then, because of non-degeneracy of $\left\langle \left. {}\right. ,\left.
{}\right. \right\rangle ,$ it follows that $\phi -\chi =o.$ Thus, $%
\underline{\gamma }^{-1}(\phi )=\underline{\gamma }^{-1}(\chi )$ implies $%
\phi =\chi .$
\end{proof}

\section{Scalar product}

The scalar product of both multivectors and multiforms are defined by making
use of the duality pairing mapping and the extended extensors $\underline{%
\gamma }$ and $\underline{\gamma }^{-1},$ respectively.

\subsection{Scalar product of multivectors}

The scalar product of $x_{p}\in \bigwedge^{p}V$ and $y_{p}\in
\bigwedge^{p}V, $ and the scalar product of $x\in \bigwedge V$ and $y\in
\bigwedge V,$ are defined by%
\begin{equation}
x_{p}\cdot y_{p}=\left\langle \underline{\gamma }(x_{p}),y_{p}\right\rangle
\label{scalarvectors}
\end{equation}

\begin{equation}
x\cdot y=\sum_{k=0}^{n}x_{k}\cdot y_{k},\text{ i.e., }x\cdot y=\left\langle 
\underline{\gamma }(x),y\right\rangle  \label{scalarmultivectors}
\end{equation}

\subsection{Scalar product of multiforms}

The scalar product of $\phi ^{p}\in \bigwedge^{p}V^{\star }$ and $\psi
^{p}\in \bigwedge^{p}V^{\star },$ and the scalar product of $\phi \in
\bigwedge V^{\star }$ and $\psi \in \bigwedge V^{\star },$ are defined by%
\begin{equation}
\phi ^{p}\cdot \psi ^{p}=\left\langle \underline{\gamma }^{-1}(\phi
^{p}),\psi ^{p}\right\rangle  \label{scalarforms}
\end{equation}%
\begin{equation}
\phi \cdot \psi =\sum_{k=0}^{n}\phi ^{k}\cdot \psi ^{k},\text{ i.e., }\phi
\cdot \psi =\left\langle \underline{\gamma }^{-1}(\phi ),\psi \right\rangle
\label{scalarmultiforms}
\end{equation}

The scalar product has three remarkable properties: bilinearity, symmetry
and non-degeneracy. They are inmediate consequence of the respectives
properties for the duality pairing mapping, once it is taken into account
the properties of both $\underline{\gamma }$ and $\underline{\gamma }^{-1}.$

\section{Contracted products of multivectors}

The contracted products of multivectors are defined by making use of the
duality contraction mappings and the extended extensor $\underline{\gamma }.$

\subsection{Left contracted product}

The left contracted product of $y_{q}\in \bigwedge^{q}V$ by $x_{p}\in
\bigwedge^{p}V$ $(p\leq q),$ and the left contracted product of $y\in
\bigwedge V$ by $x\in \bigwedge V,$ are defined by%
\begin{equation}
x_{p}\lrcorner y_{q}=\left\langle \underline{\gamma }(x_{p}),y_{q}\right\vert
\label{lcvectors}
\end{equation}%
\begin{eqnarray}
x\lrcorner y &\in &\bigwedge V\text{ such that }\left[ x\lrcorner y\right]
_{k}=\sum_{j=0}^{n-k}x_{j}\lrcorner y_{j+k},\text{\quad }\left( k=0,1,\ldots
,n\right)  \notag \\
\text{i.e.,\qquad }x\lrcorner y &=&\left\langle \underline{\gamma }%
(x),y\right\vert  \label{lcontractedmultivectors}
\end{eqnarray}

\subsection{Right contracted product}

The right contracted product of $y_{q}\in \bigwedge^{q}V$ by $x_{p}\in
\bigwedge^{p}V$ $(p\leq q),$ and the right contracted product of $y\in
\bigwedge V$ by $x\in \bigwedge V,$ are defined by%
\begin{equation}
y_{q}\llcorner x_{p}=\left\vert y_{q},\underline{\gamma }(x_{p})\right\rangle
\label{rcvectors}
\end{equation}%
\begin{eqnarray}
y\llcorner x &\in &\bigwedge V\text{ such that }\left[ y\llcorner x\right]
_{k}=\sum_{j=0}^{n-k}y_{j+k}\llcorner x_{j},\text{\quad }\left( k=0,1,\ldots
,n\right)  \notag \\
\text{i.e.,\qquad }y\llcorner x &=&\left\vert y,\underline{\gamma }%
(x)\right\rangle  \label{rcontractedmultivectors}
\end{eqnarray}

\section{Contracted products of multiforms}

The contracted products of multiforms are defined by making use of the
duality contraction mappings and the extended extensor $\underline{\gamma }%
^{-1}.$

\subsection{Left contracted product}

The left contracted product of $\psi ^{q}\in \bigwedge^{q}V^{\star }$ by $%
\phi ^{p}\in \bigwedge^{p}V^{\star }$ $(p\leq q),$ and the left contracted
product of $\psi \in \bigwedge V^{\star }$ by $\phi \in \bigwedge V^{\star
}, $ are defined by%
\begin{equation}
\phi ^{p}\lrcorner \psi ^{q}=\left\langle \underline{\gamma }^{-1}(\phi
^{p}),\psi ^{q}\right\vert  \label{lcforms}
\end{equation}%
\begin{eqnarray}
\phi \lrcorner \psi &\in &\bigwedge V^{\star }\text{ such that }\left[ \phi
\lrcorner \psi \right] ^{k}=\sum_{j=0}^{n-k}\phi ^{j}\lrcorner \psi ^{j+k},%
\text{\quad }\left( k=0,1,\ldots ,n\right)  \notag \\
\text{i.e.,\qquad }\phi \lrcorner \psi &=&\left\langle \underline{\gamma }%
^{-1}(\phi ),\psi \right\vert  \label{lcontractedmultiforms}
\end{eqnarray}

\subsection{Right contracted product}

The right contracted product of $\psi ^{q}\in \bigwedge^{q}V^{\star }$ by $%
\phi ^{p}\in \bigwedge^{p}V^{\star }$ $(p\leq q),$ and the right contracted
product of $\psi \in \bigwedge V^{\star }$ by $\phi \in \bigwedge V^{\star
}, $ are defined by%
\begin{equation}
\psi ^{q}\llcorner \phi ^{p}=\left\vert \psi ^{q},\underline{\gamma }%
^{-1}(\phi ^{p})\right\rangle  \label{rcforms}
\end{equation}%
\begin{eqnarray}
\psi \llcorner \phi &\in &\bigwedge V^{\star }\text{ such that }\left[ \psi
\llcorner \phi \right] ^{k}=\sum_{j=0}^{n-k}\psi ^{j+k}\llcorner \phi ^{j}%
\text{\quad }\left( k=0,1,\ldots ,n\right)  \notag \\
\text{i.e.,\qquad }\psi \llcorner \phi &=&\left\vert \psi ,\underline{\gamma 
}^{-1}(\phi )\right\rangle  \label{rcontractedmultiforms}
\end{eqnarray}

The contracted products have two remarkable properties: bilinearity and
non-degeneracy. They are inmediate consequence of the respective properties
for the duality contraction mappings, once it is taken into account the
properties of both $\underline{\gamma }$ and $\underline{\gamma }^{-1}.$

\section{Identities involving the metric products}

We present remarkable identities which involve the metric products, i.e.,
the scalar product and the contracted products. The proof of some identities
can be found in Appendix.

For all $v_{1},\ldots ,v_{p},w_{1},\ldots ,w_{p}\in V,$ and $\omega
_{1},\ldots ,\omega _{p},\sigma _{1},\ldots ,\sigma _{p}\in V^{\star }:$ 
\begin{eqnarray}
\left( v_{1}\wedge \cdots \wedge v_{p}\right) \cdot \left( w_{1}\wedge
\cdots \wedge w_{p}\right) &=&\det \left( v_{i}\cdot w_{j}\right) ,
\label{extvectorscalarextvector} \\
\left( \omega _{1}\wedge \cdots \wedge \omega _{p}\right) \cdot \left(
\sigma _{1}\wedge \cdots \wedge \sigma _{p}\right) &=&\det \left( \omega
_{i}\cdot \sigma _{j}\right) .  \label{extformscalarextform}
\end{eqnarray}

For all $x_{p}\in \bigwedge^{p}V,$ $y_{q}\in \bigwedge^{q}V,$ and $\phi
^{p}\in \bigwedge^{p}V^{\star },$ $\psi ^{q}\in \bigwedge^{q}V^{\star }$ $%
\left( p\leq q\right) :$%
\begin{equation}
x_{p}\lrcorner y_{q}=\left( -1\right) ^{p(q-p)}y_{q}\llcorner x_{p},\qquad
\phi ^{p}\lrcorner \psi ^{q}=\left( -1\right) ^{p(q-p)}\psi ^{p}\llcorner
\phi ^{p}.  \label{contractionpq}
\end{equation}

For all $v,v_{1},v_{2},\ldots ,v_{p}\in V,$ and $\omega ,\omega _{1},\omega
_{2},\ldots ,\omega _{p}\in V^{\star }:$%
\begin{eqnarray}
v\lrcorner \left( v_{1}\wedge v_{2}\wedge \cdots v_{p}\right)
&=&\sum_{k-1}^{p}(-1)^{k-1}\left( v\cdot v_{k}\right) v_{1}\wedge
v_{2}\wedge \cdots \overset{\curlyvee }{v}_{k}\cdots v_{p},
\label{contrvvectors} \\
\omega \lrcorner \left( \omega _{1}\wedge \omega _{2}\wedge \cdots \omega
_{p}\right) &=&\sum_{k-1}^{p}(-1)^{k-1}\left( \omega \cdot \omega
_{k}\right) \omega _{1}\wedge \omega _{2}\wedge \cdots \overset{\curlyvee }{%
\omega }_{k}\cdots \omega _{p}.  \label{contrfforms}
\end{eqnarray}%
where $\overset{\curlyvee }{v}_{k}$ means that $v_{k}$ must be removed from $%
v_{1}\wedge v_{2}\wedge \cdots v_{p}$ but leaving the remaining vectors in
the order they had, and analogously for $\overset{\curlyvee }{\omega }_{k}.$

For all $\left( x_{p}\right) ,\left( y_{p}\right) \in \bigwedge V,$ and $%
\left( \phi ^{p}\right) ,\left( \psi ^{p}\right) \in \bigwedge V^{\star }:$%
\begin{equation}
\left( x_{p}\right) \cdot \left( y_{p}\right) =x_{p}\cdot y_{p},\qquad
\left( \phi ^{p}\right) \cdot \left( \psi ^{p}\right) =\phi ^{p}\cdot \psi
^{p}.  \label{scalarhomogmult1}
\end{equation}

For all $\left( x_{p}\right) ,\left( y_{q}\right) \in \bigwedge V$ and $%
\left( \phi ^{p}\right) ,\left( \psi ^{q}\right) \in \bigwedge V^{\star }:$%
\begin{equation}
\left( x_{p}\right) \cdot \left( y_{q}\right) =\left( \phi ^{p}\right) \cdot
\left( \psi ^{q}\right) =0,\text{ if }p\neq q  \label{scalarhomogmult2}
\end{equation}

For all $\left( x_{p}\right) ,\left( y_{q}\right) \in \bigwedge V$ and $%
\left( \phi ^{p}\right) ,\left( \psi ^{q}\right) \in \bigwedge V^{\star }:$%
\begin{equation}
\left( x_{p}\right) \lrcorner \left( y_{q}\right) ,\left( y_{q}\right)
\llcorner \left( x_{p}\right) \text{ are }\left( q-p\right) \text{%
-homogeneous multivectors, and}  \label{contrhomogmultiv1}
\end{equation}%
\begin{equation*}
\left[ \left( x_{p}\right) \lrcorner \left( y_{q}\right) \right]
_{q-p}=x_{p}\lrcorner y_{q},\text{ }\left[ \left( y_{q}\right) \llcorner
\left( x_{p}\right) \right] _{q-p}=y_{q}\llcorner x_{p},\text{ if }p\leq q
\end{equation*}%
\begin{equation}
\left( x_{p}\right) \lrcorner \left( y_{q}\right) =\left( y_{q}\right)
\llcorner \left( x_{p}\right) =0,\text{ if }p>q  \label{contrhomogmultiv2}
\end{equation}%
\begin{equation*}
\end{equation*}%
\begin{equation}
\left( \phi ^{p}\right) \lrcorner \left( \psi ^{q}\right) ,\text{ }\left(
\psi ^{q}\right) \llcorner \left( \phi ^{p}\right) \text{ are}\left(
q-p\right) \text{-homogeneous multiforms, and}  \label{contrhomogmultif1}
\end{equation}%
\begin{equation*}
\left[ \left( \phi ^{p}\right) \lrcorner \left( \psi ^{q}\right) \right]
_{q-p}=\phi ^{p}\lrcorner \psi ^{q},\text{ }\left[ \left( \psi ^{q}\right)
\llcorner \left( \phi ^{p}\right) \right] _{q-p}=\psi ^{q}\llcorner \phi
^{p},\text{ if }p\leq q
\end{equation*}%
\begin{equation}
\left( \phi ^{p}\right) \lrcorner \left( \psi ^{q}\right) =\left( \psi
^{q}\right) \llcorner \left( \phi ^{p}\right) =o,\text{ if }p>q
\label{contrhomogmultif2}
\end{equation}

For all $v\in V$ and $x,y\in \bigwedge V:$%
\begin{equation}
v\lrcorner \left( x\wedge y\right) =\left( v\lrcorner x\right) \wedge y+%
\widehat{x}\wedge \left( v\lrcorner y\right)  \label{contrvecexteriormultiv}
\end{equation}

For all $\omega \in V^{\star }$ and $\phi ,\psi \in \bigwedge V^{\star }:$%
\begin{equation}
\omega \lrcorner \left( \phi \wedge \psi \right) =\left( \omega \lrcorner
\phi \right) \wedge \psi +\widehat{\phi }\wedge \left( \omega \lrcorner \psi
\right)  \label{contrformexteriormultif}
\end{equation}

For all $x,y,z\in \bigwedge V$ and $\phi ,\chi ,\psi \in \bigwedge V^{\star
}:$%
\begin{equation}
x\lrcorner \left( y\lrcorner z\right) =\left( x\wedge y\right) \lrcorner z%
\hspace{0.5in}\left( z\llcorner y\right) \llcorner x=z\llcorner \left(
y\wedge x\right)  \label{contrcontrmultiv}
\end{equation}%
\begin{equation}
\phi \lrcorner \left( \chi \lrcorner \psi \right) =\left( \phi \wedge \chi
\right) \lrcorner \psi \hspace{0.5in}\left( \psi \llcorner \chi \right)
\llcorner \phi =\psi \llcorner \left( \chi \wedge \phi \right)
\label{contrcontrmultif}
\end{equation}%
\begin{equation}
\left( x\lrcorner y\right) \cdot z=y\cdot \left( \widetilde{x}\wedge
z\right) \hspace{0.5in}z\cdot \left( y\llcorner x\right) =\left( z\wedge 
\widetilde{x}\right) \cdot y  \label{contrscalarmultiv}
\end{equation}%
\begin{equation}
\left( \phi \lrcorner \chi \right) \cdot \psi =\chi \cdot \left( \widetilde{%
\phi }\wedge \psi \right) \hspace{0.5in}\psi \cdot \left( \chi \llcorner
\phi \right) =\left( \psi \wedge \widetilde{\phi }\right) \cdot \chi
\label{contrscalarmultif}
\end{equation}

Notice that (\ref{scalarhomogmult1}), (\ref{scalarhomogmult2}), (\ref%
{contrhomogmultiv1}), (\ref{contrhomogmultiv2}), (\ref{contrhomogmultif1})
and (\ref{contrhomogmultif2}) allow us to identify $p$-vectors with $p$%
-homogeneous multivectors, and $p$-forms with $p$-homogeneous multiforms,
whenever in doing calculations with scalar products or contracted products.

\section{Formulas involving the metric extensor}

We present remarkable formulas which involve the metric extensor and/or the
pseudoscalars $e_{\wedge },e^{\wedge }$ and $\varepsilon ^{\wedge
},\varepsilon _{\wedge }.$ The proof of some formulas can be found in
Appendix.

For all $x\in \bigwedge V$ and $\phi \in \bigwedge V^{\star }:$%
\begin{equation}
\underline{\gamma }^{-1}(\phi )\cdot x=\phi \cdot \underline{\gamma }%
(x)=\left\langle \phi ,x\right\rangle  \label{scalargamma}
\end{equation}%
\begin{equation*}
\underline{\gamma }^{-1}(\phi )\lrcorner x=\left\langle \phi ,x\right\vert
,\quad x\llcorner \underline{\gamma }^{-1}(\phi )=\left\vert x,\phi
\right\rangle
\end{equation*}%
\begin{equation}
\quad \underline{\gamma }(x)\lrcorner \phi =\left\langle x,\phi \right\vert
,\quad \phi \llcorner \underline{\gamma }(x)=\left\vert \phi ,x\right\rangle
\label{contractedgamma}
\end{equation}

For $e_{\wedge },e^{\wedge }$ and $\varepsilon ^{\wedge },\varepsilon
_{\wedge }:$%
\begin{equation}
e_{\wedge }\cdot e^{\wedge }=\varepsilon ^{\wedge }\cdot \varepsilon
_{\wedge }=1,\quad e_{\wedge }\cdot e_{\wedge }=\varepsilon _{\wedge }\cdot
\varepsilon _{\wedge },\quad \varepsilon ^{\wedge }\cdot \varepsilon
^{\wedge }=e^{\wedge }\cdot e^{\wedge }  \label{pseudoscalars1}
\end{equation}%
\begin{eqnarray}
e^{\wedge } &=&\left( \varepsilon ^{\wedge }\cdot \varepsilon ^{\wedge
}\right) e_{\wedge },\quad \varepsilon _{\wedge }=\left( e_{\wedge }\cdot
e_{\wedge }\right) \varepsilon ^{\wedge }  \notag \\
e_{\wedge } &=&\left( \varepsilon _{\wedge }\cdot \varepsilon _{\wedge
}\right) e^{\wedge },\quad \varepsilon ^{\wedge }=\left( e^{\wedge }\cdot
e^{\wedge }\right) \varepsilon _{\wedge }  \label{pseudoscalars2}
\end{eqnarray}%
\begin{equation}
\left( e_{\wedge }\cdot e_{\wedge }\right) \left( e^{\wedge }\cdot e^{\wedge
}\right) =\left( \varepsilon ^{\wedge }\cdot \varepsilon ^{\wedge }\right)
\left( \varepsilon _{\wedge }\cdot \varepsilon _{\wedge }\right) =1
\label{pseudoscalars3}
\end{equation}

For all $v,w_{1},w_{2},\ldots ,w_{p}\in V$ and $\omega ,\sigma _{1},\sigma
_{2},\ldots ,\sigma _{p}\in V^{\star }:$%
\begin{eqnarray}
\underline{\gamma }\left\langle \gamma (v),w_{1}\wedge w_{2}\wedge \cdots
\wedge w_{p}\right\vert &=&\left\langle v,\underline{\gamma }(w_{1}\wedge
w_{2}\wedge \cdots \wedge w_{p})\right\vert  \label{gamma1} \\
\underline{\gamma }^{-1}\left\langle \gamma ^{-1}(\omega ),\sigma _{1}\wedge
\sigma _{2}\wedge \cdots \wedge \sigma _{p}\right\vert &=&\left\langle
\omega ,\underline{\gamma }^{-1}(\sigma _{1}\wedge \sigma _{2}\wedge \cdots
\wedge \sigma _{p})\right\vert  \label{gamma1b}
\end{eqnarray}

For all $v\in V$ and $\omega \in V^{\star }:$%
\begin{eqnarray}
\underline{\gamma }\left( v\lrcorner e_{\wedge }\right) &=&\left( e_{\wedge
}\cdot e_{\wedge }\right) \left\langle v,\varepsilon ^{\wedge }\right\vert 
\notag \\
\left( v\lrcorner e_{\wedge }\right) \lrcorner \widetilde{e}_{\wedge }
&=&\left( e_{\wedge }\cdot e_{\wedge }\right) v  \notag \\
\gamma \left( \left\langle \omega ,e_{\wedge }\right\vert \lrcorner 
\widetilde{e}_{\wedge }\right) &=&\left( e_{\wedge }\cdot e_{\wedge }\right)
\omega  \label{gamma2}
\end{eqnarray}

For all $\omega \in V^{\star }:$%
\begin{equation}
\gamma ^{-1}(\omega )=\frac{1}{e_{\wedge }\cdot e_{\wedge }}\left\langle
\omega ,e_{\wedge }\right\vert \lrcorner \widetilde{e}_{\wedge }=\frac{1}{%
e_{\wedge }\cdot e_{\wedge }}\left\langle \omega ,\widetilde{e}_{\wedge
}\right\vert \lrcorner e_{\wedge }  \label{gamma3}
\end{equation}

For all $x,y\in \bigwedge V$ and $\phi ,\psi \in \bigwedge V^{\star }:$%
\begin{eqnarray}
\underline{\gamma }\left\langle \underline{\gamma }(x),y\right\vert
&=&\left\langle x,\underline{\gamma }(y)\right\vert  \label{gamma4} \\
\underline{\gamma }^{-1}\left\langle \underline{\gamma }^{-1}(\phi ),\psi
\right\vert &=&\left\langle \phi ,\underline{\gamma }^{-1}(\psi )\right\vert
\label{gamma4b}
\end{eqnarray}

For all $x\in \bigwedge V$ and $\phi \in \bigwedge V^{\star }:$%
\begin{eqnarray}
\underline{\gamma }\left( x\lrcorner e_{\wedge }\right) &=&\left( e_{\wedge
}\cdot e_{\wedge }\right) \left\langle x,\varepsilon ^{\wedge }\right\vert 
\notag \\
\left( x\lrcorner e_{\wedge }\right) \lrcorner \widetilde{e}_{\wedge }
&=&\left( e_{\wedge }\cdot e_{\wedge }\right) x  \notag \\
\underline{\gamma }\left( \left\langle \phi ,e_{\wedge }\right\vert
\lrcorner \widetilde{e}_{\wedge }\right) &=&\left( e_{\wedge }\cdot
e_{\wedge }\right) \phi  \label{gamma5}
\end{eqnarray}

For all $\phi \in \bigwedge V^{\star }:$%
\begin{equation}
\underline{\gamma }^{-1}(\phi )=\frac{1}{e_{\wedge }\cdot e_{\wedge }}%
\left\langle \phi ,e_{\wedge }\right\vert \lrcorner \widetilde{e}_{\wedge }=%
\frac{1}{e_{\wedge }\cdot e_{\wedge }}\left\langle \phi ,\widetilde{e}%
_{\wedge }\right\vert \lrcorner e_{\wedge }  \label{gamma6}
\end{equation}

For all $x\in \bigwedge V$ and $\phi \in \bigwedge V^{\star }:$%
\begin{eqnarray}
x &=&\frac{1}{e_{\wedge }\cdot e_{\wedge }}\left( x\lrcorner e_{\wedge
}\right) \lrcorner \widetilde{e}_{\wedge }=\frac{1}{e_{\wedge }\cdot
e_{\wedge }}\left( x\lrcorner \widetilde{e}_{\wedge }\right) \lrcorner
e_{\wedge }  \label{gamma7} \\
\phi &=&\frac{1}{\varepsilon ^{\wedge }\cdot \varepsilon ^{\wedge }}\left(
\phi \lrcorner \varepsilon ^{\wedge }\right) \lrcorner \widetilde{%
\varepsilon }^{\wedge }=\frac{1}{\varepsilon ^{\wedge }\cdot \varepsilon
^{\wedge }}\left( \phi \lrcorner \widetilde{\varepsilon }^{\wedge }\right)
\lrcorner \widetilde{\varepsilon }^{\wedge }  \label{gamma7b}
\end{eqnarray}

The equations (\ref{gamma3}) and (\ref{gamma6}) are respectively the
inversion formulas for $\gamma $ and $\underline{\gamma }.$ Notice that the
first one is a particular case of the second one. Of course, there exist
similar others involving only the right contraction and the right contracted
product.

The equations (\ref{gamma7}) and (\ref{gamma7b}) are expansion-like formulas
for multivectors and multiforms, respectively. Of course, there exist
similar others involving only the right contracted product.

\section{Conclusions}

For the so-called vector spaces of multiforms and multivectors over a finite
dimensional real vector space, we have introduced the key concept of duality
mappings. And, for any finite dimensional real vector space, we have
introduced another key concept: the metric extensor. We have disclosed the
logical equivalence between the so-called metric tensor and the metric
extensor. The duality mappings and the metric extensor allowed us to
construct the scalar product and the contracted products of both multiforms
and multivectors. The metric extensor has many advantages over the metric
tensor: (1) because of it has inverse, we can define the scalar product of
vectors and forms (2) because of properties of its extension, it is possible
to define the scalar product and the contracted products of both
multivectors and multiforms. On another side, the Clifford product of both
multivectors and multiforms is not outside, since the scalar product and the
contracted products fall within its definition (besides the exterior product
which depends only on the duality structure). So, everything is done under a
single conceptual unit: the metric extensor. This proposal unveils, once and
for all, the hidden nature of all metric product: a metric product defined
over a vector space can be visualized as a kind of deformation induced by
the metric extensor onto the duality structure of that vector space.

\section{Appendix}

We here make the proof in detail of some identities and formulas to show
which are "the tricks of the trade".

Proof of (\ref{contrvecexteriormultiv}): Using (\ref{lcontractedmultivectors}%
) and (\ref{contrformmultivectors}), we get%
\begin{eqnarray*}
v\lrcorner \left( x\wedge y\right) &=&\left\langle \gamma (v),x\wedge
y\right\vert =\left\langle \gamma (v),x\right\vert \wedge y+\widehat{x}%
\wedge \left\langle \gamma (v),y\right\vert \\
&=&\left( v\lrcorner x\right) \wedge y+\widehat{x}\wedge \left( v\lrcorner
y\right) .
\end{eqnarray*}

Proof of (\ref{contrformexteriormultif}): Using (\ref{lcontractedmultiforms}%
) and (\ref{contrvectormultiforms}), we get%
\begin{eqnarray*}
\omega \lrcorner \left( \phi \wedge \psi \right) &=&\left\langle \gamma
^{-1}(\omega ),\phi \wedge \psi \right\vert =\left\langle \gamma
^{-1}(\omega ),\phi \right\vert \wedge \psi +\widehat{\phi }\wedge
\left\langle \gamma ^{-1}(\omega ),\psi \right\vert \\
&=&\left( \omega \lrcorner \phi \right) \wedge \psi +\widehat{\phi }\wedge
\left( \omega \lrcorner \psi \right) .
\end{eqnarray*}

Proof of (\ref{contrcontrmultiv}): Using (\ref{lcontractedmultivectors}) and
(\ref{rcontractedmultivectors}), and taking into account 1st. and 2nd.
equations from (\ref{dcvectors}), (\ref{extension2}), it can be written%
\begin{eqnarray*}
x\lrcorner \left( y\lrcorner z\right) &=&\left\langle \underline{\gamma }%
(x),\left\langle \underline{\gamma }(y),z\right\vert \right\vert
=\left\langle \underline{\gamma }(x)\wedge \underline{\gamma }%
(y),z\right\vert =\left\langle \underline{\gamma }(x\wedge y),z\right\vert
=\left( x\wedge y\right) \lrcorner z, \\
\left( z\llcorner y\right) \llcorner x &=&\left\vert \left\vert z,\underline{%
\gamma }(y)\right\rangle ,\underline{\gamma }(x)\right\rangle =\left\vert z,%
\underline{\gamma }(y)\wedge \underline{\gamma }(x)\right\rangle =\left\vert
z,\underline{\gamma }(y\wedge x)\right\rangle =z\llcorner \left( y\wedge
x\right) .
\end{eqnarray*}

Proof of (\ref{contrcontrmultif}): Using (\ref{lcontractedmultiforms}) and (%
\ref{rcontractedmultiforms}), and taking into account 1st. and 2nd.
equations from (\ref{dcforms}), analogous equation to (\ref{extension2}) for 
$\underline{\gamma }^{-1},$ it can be written%
\begin{eqnarray*}
\phi \lrcorner \left( \chi \lrcorner \psi \right) &=&\left\langle \underline{%
\gamma }^{-1}(\phi ),\left\langle \underline{\gamma }^{-1}(\chi ),\psi
\right\vert \right\vert =\left\langle \underline{\gamma }^{-1}(\phi )\wedge 
\underline{\gamma }^{-1}(\chi ),\psi \right\vert =\left\langle \underline{%
\gamma }^{-1}(\phi \wedge \chi ),\psi \right\vert \\
&=&\left( \phi \wedge \chi \right) \lrcorner \psi , \\
\left( \psi \llcorner \chi \right) \llcorner \phi &=&\left\vert \left\vert
\psi ,\underline{\gamma }^{-1}(\chi )\right\rangle ,\underline{\gamma }%
^{-1}(\phi )\right\rangle =\left\vert \psi ,\underline{\gamma }^{-1}(\chi
)\wedge \underline{\gamma }^{-1}(\phi )\right\rangle =\left\vert \psi ,%
\underline{\gamma }^{-1}(\chi \wedge \phi )\right\rangle \\
&=&\psi \llcorner \left( \chi \wedge \phi \right) .
\end{eqnarray*}

Proof of (\ref{contrscalarmultiv}): Using (\ref{scalarmultivectors}), (\ref%
{lcontractedmultivectors}) and (\ref{rcontractedmultivectors}), and
considering 1st. and 2nd. equations from (\ref{dcpvectors}), (\ref%
{extension2}), (\ref{extension3}), we get%
\begin{eqnarray*}
\left( x\lrcorner y\right) \cdot z &=&\left\langle \left\langle \underline{%
\gamma }(x),y\right\vert ,\underline{\gamma }(z)\right\rangle =\left\langle
y,\underline{\gamma }(\widetilde{x})\wedge \underline{\gamma }%
(z)\right\rangle =\left\langle y,\underline{\gamma }(\widetilde{x}\wedge
z)\right\rangle =y\cdot \left( \widetilde{x}\wedge z\right) , \\
z\cdot \left( y\llcorner x\right) &=&\left\langle \underline{\gamma }%
(z),\left\vert y,\underline{\gamma }(x)\right\rangle \right\rangle
=\left\langle \underline{\gamma }(z)\wedge \underline{\gamma }(\widetilde{x}%
),y\right\rangle =\left\langle \underline{\gamma }(z\wedge \widetilde{x}%
),y\right\rangle =\left( z\wedge \widetilde{x}\right) \cdot y.
\end{eqnarray*}

Proof of (\ref{contrscalarmultif}): Using (\ref{scalarmultiforms}), (\ref%
{lcontractedmultiforms}) and (\ref{rcontractedmultiforms}), and considering
1st. and 2nd. equations from (\ref{dcpforms}), analogous equation to (\ref%
{extension2}) for $\underline{\gamma }^{-1},$ analogous equation to (\ref%
{extension3}) for $\underline{\gamma }^{-1},$ we get%
\begin{eqnarray*}
\left( \phi \lrcorner \chi \right) \cdot \psi &=&\left\langle \left\langle 
\underline{\gamma }^{-1}(\phi ),\chi \right\vert ,\underline{\gamma }%
^{-1}(\psi )\right\rangle =\left\langle \chi ,\underline{\gamma }^{-1}(%
\widetilde{\phi })\wedge \underline{\gamma }^{-1}(\psi )\right\rangle \\
&=&\left\langle \chi ,\underline{\gamma }^{-1}(\widetilde{\phi }\wedge \psi
)\right\rangle =\chi \cdot \left( \widetilde{\phi }\wedge \psi \right) , \\
\psi \cdot \left( \chi \llcorner \phi \right) &=&\left\langle \underline{%
\gamma }^{-1}(\psi ),\left\vert \chi ,\underline{\gamma }^{-1}(\phi
)\right\rangle \right\rangle =\left\langle \underline{\gamma }^{-1}(\psi
)\wedge \underline{\gamma }^{-1}(\widetilde{\phi }),\chi \right\rangle \\
&=&\left\langle \underline{\gamma }^{-1}(\psi \wedge \widetilde{\phi }),\chi
\right\rangle =\left( \psi \wedge \widetilde{\phi }\right) \cdot \chi .
\end{eqnarray*}

Proof of (\ref{gamma1}): Using (\ref{contrformvectors}), symmetry of $\gamma
,$ linearity of $\underline{\gamma },$ (\ref{extension2}) and (\ref%
{contrvectorforms}), we have%
\begin{eqnarray*}
\underline{\gamma }\left\langle \gamma (v),w_{1}\wedge \cdots
w_{p}\right\vert &=&\underline{\gamma }\left(
\sum_{k=1}^{p}(-1)^{k-1}\left\langle \gamma (v),v_{k}\right\rangle
w_{1}\wedge \cdots \overset{\curlyvee }{w}_{k}\cdots w_{p}\right) \\
&=&\sum_{k=1}^{p}(-1)^{k-1}\left\langle v,\gamma (w_{k})\right\rangle \gamma
(w_{1})\wedge \cdots \gamma (\overset{\curlyvee }{w}_{k})\cdots \gamma
(w_{p}) \\
&=&\left\langle v,\gamma (w_{2})\wedge \cdots \gamma (w_{p})\right\vert
=\left\langle v,\underline{\gamma }(w_{1}\wedge \cdots w_{p})\right\vert .
\end{eqnarray*}

Proof of (\ref{gamma2}): Using (\ref{gamma1}), 2nd. equation from (\ref%
{pseudoscalars2}) and linearity of $\left\langle \left. {}\right. ,\left.
{}\right. \right\vert ,$ we have%
\begin{equation}
\underline{\gamma }\left( v\lrcorner e_{\wedge }\right) =\left\langle v,%
\underline{\gamma }(e_{\wedge })\right\vert =\left\langle v,\varepsilon
_{\wedge }\right\vert =\left\langle v,\left( e_{\wedge }\cdot e_{\wedge
}\right) \varepsilon ^{\wedge }\right\vert =\left( e_{\wedge }\cdot
e_{\wedge }\right) \left\langle v,\varepsilon ^{\wedge }\right\vert . 
\tag{i}
\end{equation}

Using (\ref{lcvectors}), (i), linearity of $\left\langle \left. {}\right.
,\left. {}\right. \right\vert $ and 2nd. equation from (\ref%
{expansionformula1}), we can write 
\begin{eqnarray}
\left( v\lrcorner e_{\wedge }\right) \lrcorner \widetilde{e}_{\wedge }
&=&\left\langle \underline{\gamma }\left( v\lrcorner e_{\wedge }\right) ,%
\widetilde{e}_{\wedge }\right\vert =\left\langle \left( e_{\wedge }\cdot
e_{\wedge }\right) \left\langle v,\varepsilon ^{\wedge }\right\vert ,%
\widetilde{e}_{\wedge }\right\vert =\left( e_{\wedge }\cdot e_{\wedge
}\right) \left\langle \left\langle v,\varepsilon ^{\wedge }\right\vert ,%
\widetilde{e}_{\wedge }\right\vert  \notag \\
&=&\left( e_{\wedge }\cdot e_{\wedge }\right) v.  \TCItag{ii}
\end{eqnarray}

Let $v=\gamma ^{-1}(\omega )$ in (ii), it yields%
\begin{equation}
\left( \gamma ^{-1}(\omega )\lrcorner e_{\wedge }\right) \lrcorner 
\widetilde{e}_{\wedge }=\left( e_{\wedge }\cdot e_{\wedge }\right) \gamma
^{-1}(\omega )  \tag{iii}
\end{equation}%
but, using (\ref{lcvectors}), we have $\gamma ^{-1}(\omega )\lrcorner
e_{\wedge }=\left\langle \gamma \gamma ^{-1}(\omega ),e_{\wedge }\right\vert
=\left\langle \omega ,\right\vert e_{\wedge }.$ Then, putting this result
into (iii), we get%
\begin{equation*}
\left\langle \omega ,e_{\wedge }\right\vert \lrcorner \widetilde{e}_{\wedge
}=\left( e_{\wedge }\cdot e_{\wedge }\right) \gamma ^{-1}(\omega ),\text{%
\quad i.e., }\gamma \left( \left\langle \omega ,e_{\wedge }\right\vert
\lrcorner \widetilde{e}_{\wedge }\right) =\left( e_{\wedge }\cdot e_{\wedge
}\right) \omega .
\end{equation*}

Proof of (\ref{gamma3}): We see that $e_{\wedge }\cdot e_{\wedge }\neq 0,$
because of 2nd. equation from (\ref{pseudoscalars3}), and $\left\langle
\omega ,e_{\wedge }\right\vert \lrcorner \widetilde{e}_{\wedge }$ is
certainly a vector. We check that $\gamma ^{-1}\gamma (v)=v$ and $\gamma
\gamma ^{-1}(\omega )=\omega .$

Using (\ref{lcvectors}) and 1st. equation from (\ref{gamma2}), we get%
\begin{equation*}
\gamma ^{-1}\gamma (v)=\gamma ^{-1}\left[ \gamma (v)\right] =\frac{1}{%
e_{\wedge }\cdot e_{\wedge }}\left\langle \gamma (v),e_{\wedge }\right\vert
\lrcorner \widetilde{e}_{\wedge }=\frac{1}{e_{\wedge }\cdot e_{\wedge }}%
\left( v\lrcorner e_{\wedge }\right) \lrcorner \widetilde{e}_{\wedge }=v,
\end{equation*}%
and by linearity of $\gamma $ and use of 3rd. equation from (\ref{gamma2}),
we get 
\begin{equation*}
\gamma \gamma ^{-1}(\omega )=\gamma \left[ \gamma ^{-1}(\omega )\right]
=\gamma \left( \frac{1}{e_{\wedge }\cdot e_{\wedge }}\left\langle \omega
,e_{\wedge }\right\vert \lrcorner \widetilde{e}_{\wedge }\right) =\frac{1}{%
e_{\wedge }\cdot e_{\wedge }}\gamma \left( \left\langle \omega ,e_{\wedge
}\right\vert \lrcorner \widetilde{e}_{\wedge }\right) =\omega .
\end{equation*}

{\large Acknowledgements: }I am very grateful to Dr. W. A. Rodrigues, Jr.
for his enlightening discussions in the past and his interest in my research
work today. I would like to express my most sincere gratitude to Lic. M. T.
Bernasconi, certainly, this paper could never have been written without her
loving spiritual support.

\end{document}